\newtheorem{theorem}{Theorem}
\newtheorem{lemma}[theorem]{Lemma}
\newtheorem{definition}[theorem]{Definition}
\newtheorem{proposition}[theorem]{Proposition}
\def\*#1{\mathbf{#1}} \def\+#1{\mathcal{#1}} \def\-#1{\mathrm{#1}}\def\^#1{\mathbb{#1}}\def\!#1{\texttt{#1}}
\def\C{\mathscr{C}}
\def\B{\mathscr{B}}
\newcommand{\set}[1]{\left\{#1\right\}}
\newcommand{\tuple}[1]{\left(#1\right)} 
\newcommand{\eps}{\varepsilon}
 \newcommand{\tp}{\tuple}
\renewcommand{\mid}{\;\middle\vert\;} \newcommand{\cmid}{\,:\,}
\renewcommand{\Pr}[1]{ \mathbf{Pr}\left[#1\right]}
\newcommand{\E}[1]{ \mathbf{E}\left[#1\right]}
\renewcommand{\emptyset}{\varnothing}
\newcommand{\LLL}{Lov\'asz local lemma}
\title{A Perfect Sampler for Hypergraph Independent Sets}
\author{Guoliang Qiu}
\address[Guoliang Qiu]{Shanghai Jiao Tong University, China. \textnormal{E-mail: \url{guoliang.qiu@sjtu.edu.cn}}}
\author{Yanheng Wang}
\address[Yanheng Wang]{ETH Z\"{u}rich, Switzerland. \textnormal{E-mail: \url{yanhwang@student.ethz.ch}}}
\author{Chihao Zhang}
\address[Chihao Zhang]{Shanghai Jiao Tong University, China. \textnormal{E-mail: \url{chihao@sjtu.edu.cn}}}
\thanks{Part of this work was done when Yanheng Wang was an undergraduate student at Shanghai Jiao Tong University.}
\begin{document}

\begin{abstract}
    The problem of uniformly sampling hypergraph independent sets is revisited. We design an efficient \emph{perfect} sampler for the problem under a condition similar to that of the asymmetric \LLL. When applied to $d$-regular $k$-uniform hypergraphs on $n$ vertices, our sampler terminates in expected $O(n\log n)$ time provided $d\le c\cdot 2^{k/2}/k$ for some constant $c>0$. If in addition the hypergraph is linear, the condition can be weaken to $d\le c\cdot 2^{k}/k^2$ for some constant $c>0$, matching the rapid mixing condition for Glauber dynamics in Hermon, Sly and Zhang \cite{HSZ19}.
\end{abstract}

\maketitle

\section{Introduction}\label{sec1}

The problem of uniformly sampling hypergraph independent sets, or equivalently the solutions of monotone CNF formulas, has been well-studied in recent years. Consider a hypergraph $\Phi=(V,\C)$ on $n := \abs{V}$ vertices. A set $S\subseteq V$ is an independent set if $C\cap S\ne C$ for all $C\in \C$. Assuming the hypergraph is $d$-regular and $k$-uniform, \cite{HSZ19} showed that the natural Glauber dynamics mixes in $O(n\log n)$ time when $d\le c\cdot 2^{\frac{k}{2}}$ for some constant $c>0$. The sampler implies a fully polynomial-time randomized approximation scheme (FPRAS) for counting hypergraph independent sets \cite{JVV86}. On the other hand, it was shown in \cite{BGGGS19} that there is no FPRAS for the problem when $d\ge 5\cdot 2^{\frac{k}{2}}$, unless $\mathbf{NP}=\mathbf{RP}$. Therefore, the result of \cite{HSZ19} is tight up to a multiplicative constant.

The proof in \cite{HSZ19} analyzes the continuous-time Glauber dynamics under the framework of \emph{information percolation} developed in \cite{LS16} for studying the cutoff phenomenon of the Ising model. In this framework, one can view the coupling history of a Markov chain as time-space slabs, and the failure of the coupling at time $t$ as a discrepancy path percolating from $t$ back to the beginning. The analysis of this structure can result in (almost) optimal bounds in many interesting settings.

The percolation analysis in \cite{HSZ19} is technically complicated due to the \emph{continuous} nature of the chain which leads to involved dependencies in both time and space. Recently, discrete analogs of the time-space slabs have been introduced in sampling solutions of general CNF formulas in the \LLL~regime \cite{JPV21,HSW21}. Notably in \cite{HSW21}, an elegant discrete time-space structure, tailored to \emph{systematic scan}, was introduced to support the analysis of \emph{coupling from the past} (CFTP) paradigm \cite{PW96}. In contrast to simulating Glauber dynamics, the method of CFTP allows producing \emph{perfect samples} (i.e. without approximation error) from the stationary distribution.

In this article, we refine the discrete time-space structure in \cite{HSW21}, which we call the \emph{witness graph}, and apply it to the problem of sampling hypergraph independent sets. This leads to an efficient sampler that (1) outputs perfect samples; (2) applies to instances satisfying an asymmetric \LLL-like condition and matches the bound in \cite{HSZ19} for $d$-regular $k$-uniform linear hypergraphs; and (3) is very simple to analyze. More specifically, we study a natural grand coupling of the systematic scan for sampling hypergraph independent sets. To apply the method of CFTP, one needs to detect the coalescence of the grand coupling efficiently at each stage of the algorithm. The monotone property of hypergraph independent sets allows us to reduce the detection to arguing about percolation on the witness graph. This observation provides sufficient flexibility for us to carefully bound related probabilities.  We first show that, under a condition similar to the \emph{asymmetric} \LLL, a perfect sampler exists.

\begin{theorem}\label{thm:asym-main}
  Let $\+G$ collect all hypergraphs $\Phi=(V,\C)$ such that
    \[
		\forall C\in \C: 2\abs{C}\cdot 2^{-\abs{C}}\le (1-\eps) \cdot x(C)\cdot \prod_{C'\in\Gamma_{\Phi^2}(C)}(1-x(C')),
	\]
	for some constant $\eps\in(0,1)$ and function $x:\C\to(0,1)$. Here $\Gamma_{\Phi^2}(C)$ denotes the set of hyperedges within distance $2$ to $C$ in $\Phi$ (excluding $C$ itself).
	
	There exists an algorithm such that given as input a hypergraph $\Phi\in\+G$, outputs an independent set of $\Phi$ uniformly at random, in expected time \[ O \left( -\frac{1}{\log(1-\eps)}\cdot \log (\sum_{C\in\C} \frac{x(C)}{1-x(C)})\cdot \sum_{C \in \C}  \sum_{C \cap C' \neq \emptyset} \abs{C'}  \abs{C} \right) , \]
	where $d_v$ is the number of hyperedges in $\Phi$ containning $v$.
\end{theorem}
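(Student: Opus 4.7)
The plan is to prove Theorem~\ref{thm:asym-main} through a monotone coupling-from-the-past (CFTP) scheme applied to the grand coupling of a systematic-scan chain for hypergraph independent sets, with coalescence detected and bounded by a refined witness graph in the spirit of \cite{HSW21}. I would first specify a systematic scan that, in a single sweep, visits every hyperedge in a fixed cyclic order and performs a local update driven by a fresh block of independent uniform bits, coupled across all initial configurations. Because independent sets form a lattice under the coordinate-wise order and the local update is monotone in its environment, the chains started from the two extremal comparable configurations sandwich every other trajectory, so coalescence of these two chains at time $0$ certifies that the output is a uniform sample from the stationary distribution by the usual CFTP argument of \cite{PW96}.

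The core of the argument is to exhibit, for every failure of coalescence at time $0$, a certifying \emph{witness graph}: a rooted, connected structure in the time-space slab whose nodes are indexed by $(\mathrm{time},\mathrm{hyperedge})$ pairs and whose edges encode the causal propagation of discrepancies between the two sandwich chains. Concretely, a node at hyperedge $C$ either consumed random bits producing a nearly-monochromatic local configuration (an event of probability at most $2\abs{C}\cdot 2^{-\abs{C}}$ by direct counting on uniform bits), or it inherited its discrepancy from a previous node at some $C'\in\Gamma_{\Phi^2}(C)$. The distance-$2$ scope in the hypothesis is matched exactly here: a single update at $C$ depends on the values in $C$, which were last written by updates at hyperedges sharing a vertex with $C$; tracing a vertex disagreement through those writes reaches the full second-order neighbourhood, but no further.

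The next step is to translate the LLL-like hypothesis into an exponential tail on witness sizes. For any fixed witness with node set $S$, independence of the random bits across distinct time steps yields a factor $\prod_{C\in S} 2\abs{C}\cdot 2^{-\abs{C}}$, which by the hypothesis is at most $(1-\eps)^{\abs{S}}\prod_{C\in S} x(C)\prod_{C'\in\Gamma_{\Phi^2}(C)}(1-x(C'))$. I would then organise witnesses as rooted trees whose children of a $C$-node range over $\Gamma_{\Phi^2}(C)$, so that the $(1-x(C'))$ factors pair up with the $x(C')$ factors contributed when $C'$ appears as a child, in Moser--Tardos-style telescoping. Summing over all rooted tree shapes produces the union bound
\[
    \Pr{\text{witness of size} \ge k \text{ exists}} \;\le\; (1-\eps)^{k}\sum_{C\in\C}\frac{x(C)}{1-x(C)}.
\]

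Combining the three pieces, the CFTP doubling rule terminates once $T$ reaches order $\log\bigl(\sum_{C\in\C} x(C)/(1-x(C))\bigr)/|\log(1-\eps)|$ sweeps, and each sweep can be executed and scanned for local discrepancies in $O(\sum_{C}\sum_{C\cap C'\neq\emptyset}\abs{C}\abs{C'})$ time, delivering the claimed expected runtime. The main obstacle I anticipate is in the second step: pinning down a definition of the witness that is simultaneously narrow enough to stay within the distance-$2$ dependency neighbourhood used in the hypothesis, yet complete enough to cover every mode of non-coalescence, while correctly bookkeeping the monotone sandwich so that extremal-chain disagreements do not secretly propagate through wider neighbourhoods. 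Handling this cleanly is precisely what the refinement of the \cite{HSW21} witness graph is designed to achieve.
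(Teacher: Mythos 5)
Your high-level strategy matches the paper's: monotone CFTP via a systematic scan, a witness graph in time-space, and a Moser--Tardos-style tree enumeration that telescopes the $x(C)\prod(1-x(C'))$ factors into a $(1-\eps)^k$ tail. The runtime bookkeeping is also correct in outline.

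There is, however, a genuine gap in the probability bound. You write that ``for any fixed witness with node set $S$, independence of the random bits across distinct time steps yields a factor $\prod_{C\in S}2\abs{C}\cdot 2^{-\abs{C}}$.'' This does not hold for the witness as you define it: two consecutive nodes of a discrepancy path share at least one timestamp by construction (that is what makes them adjacent in the witness graph), so their openness events are \emph{not} independent and the per-node factors $2^{-\abs{C}}$ do not multiply. The paper resolves this by passing from the length-$L$ witness path $(e_1,\dots,e_L)$ to the ``even'' subpath $(e_1,e_3,e_5,\dots)$: after skipping every other node the remaining timestamp sets are pairwise disjoint, so the openness probabilities genuinely factorize as $\prod_i 2^{-\abs{C(e_i)}}$. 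This is also where the distance-$2$ neighbourhood $\Gamma_{\Phi^2}(C)$ in the hypothesis actually comes from: the witness-graph edges connect hyperedges at distance $1$, and the ``take every other node'' step turns them into distance-$2$ steps. Your explanation of the distance-$2$ scope (``tracing a vertex disagreement through writes reaches the second-order neighbourhood'') is a plausible-sounding story but is not what forces the exponent $2$; the forcing reason is independence restoration.

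Relatedly, your interpretation of the quantity $2\abs{C}\cdot 2^{-\abs{C}}$ as a single ``event of probability at most $2\abs{C}\cdot 2^{-\abs{C}}$'' conflates two ingredients that the paper keeps separate: the factor $2^{-\abs{C}}$ is the openness probability of one node, while the factor $2\abs{C}$ is a combinatorial multiplicity bound (Lemma~\ref{branchingnumber} and Lemma~\ref{lem:multi-map}) on how many distinct witness-graph nodes at distance $2$ can carry a given label and hence map to the same abstract Galton--Watson tree node. If you try to absorb both into a per-node event probability you lose the independence structure. Finally, a small point: the paper's scan cycles through \emph{vertices}, updating one vertex per step, not through hyperedges; the witness nodes are sets of last-update timestamps of a hyperedge's vertices, which is what makes the shared-timestamp issue above unavoidable.
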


In the proof of \Cref{thm:asym-main}, we view a discrepancy path of the coupling percolating from time $t$ back to the beginning as an object similar to the \emph{witness tree} in \cite{MT10} for certifying the non-termination of a randomized algorithm. This is an interesting analogy between an algorithm that \emph{samples} the solutions of CNF formulas and an algorithm that \emph{finds} them. We map the discrepancy paths in the witness graph to connected trees generated by \emph{multi-type Galton-Watson process} in the hypergraph. As a result, a certain spatial mixing property implies the rapid mixing of the chain.

For $k$-uniform $d$-regular hypergraphs, the result in \Cref{thm:asym-main} translates to the condition $d\le \frac{c\cdot 2^{k/2}}{k^{1.5}}$ for some constant $c>0$ by choosing $x(C) = \frac{1}{d^2k^2}$. We give a refined analysis for this symmetric case:

\begin{theorem}\label{thm:sym-main}
	 Let $\+G$ collect all $k$-uniform $d$-regular hypergraphs such that
    \[
		d \leq  \frac{(1-\eps)}{20}\cdot \frac{2^{k/2}}{k}
	\]
	for some constant $\eps\in(0,1)$. There exists an algorithm such that given as input a hypergraph $\Phi=(V,\C)\in\+G$, outputs a hypergraph independent set of $\Phi$ uniformly at random, in expected time $O \left(-\frac{1}{\log(1-\eps)}\cdot k^2d^2n \cdot( \log n+\log d) \right)$.
\end{theorem}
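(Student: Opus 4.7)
The plan is to reuse the coupling-from-the-past scheme, the systematic-scan grand coupling and the witness-graph framework underlying \Cref{thm:asym-main}, but to replace the generic asymmetric-LLL calculation by a direct branching-process analysis that exploits the $d$-regular $k$-uniform structure. A symmetric application of \Cref{thm:asym-main} with $x(C)\equiv 1/(k^2 d^2)$ only yields $d\lesssim 2^{k/2}/k^{3/2}$, so I have to save a factor of $\sqrt{k}$ inside the witness-tree estimate.

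First I would keep the same witness object as in \Cref{thm:asym-main}: a failure to coalesce at hyperedge $C_0$ after $T$ systematic-scan rounds certifies a connected rooted subtree $\tau$ in the dependency hypergraph whose edges $(C,C')$ each correspond to a discrepancy propagated from a parent $C$ to a child $C'$ by a resampling. The per-edge ``success'' probability is bounded by $2\cdot 2^{-k}$: under the monotone grand coupling, only the all-true or all-false pattern on the resampled hyperedge can carry a disagreement forward. The refinement compared with the LLL calculation is that, along a tree edge $(C,C')$, the carrier vertex and the intermediate hyperedge through which the discrepancy flows are pinned down by the tree structure and by the scan schedule, so the per-edge weight is the \emph{hyperedge-level} quantity $2\cdot 2^{-k}$ rather than the \emph{vertex-level} quantity $2k\cdot 2^{-k}$ appearing in \Cref{thm:asym-main}. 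Summing over the at most $k\cdot d\cdot k\cdot d = k^2d^2$ candidate children in $\Gamma_{\Phi^2}(C)$, the expected offspring count of a node in $\tau$ is at most
\[
2k^2d^2\cdot 2^{-k} \;\le\; \frac{(1-\eps)^2}{200}
\]
whenever $d\le (1-\eps)/20\cdot 2^{k/2}/k$, which makes the underlying multi-type Galton--Watson branching process strictly subcritical with decay rate at most $1-\eps$.

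A standard moment computation on this process then shows that the probability that $\lvert\tau\rvert\ge t$ is at most $(1-\eps)^{t}$; since the depth of $\tau$ is bounded by its size, the chance that CFTP fails at the root $C_0$ within $T$ scan rounds is at most $(1-\eps)^{\Omega(T)}$. A union bound over the $\lvert\C\rvert=nd/k$ possible roots shows that $T=O\big((\log n+\log d)/\lvert\log(1-\eps)\rvert\big)$ scan rounds suffice for coalescence, and, as already computed in \Cref{thm:asym-main}, one scan round on a $d$-regular $k$-uniform instance costs $O(k^2d^2n)$; multiplying the two quantities recovers the claimed expected running time.

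The main obstacle is rigorously justifying the hyperedge-level offspring probability $2\cdot 2^{-k}$: one has to argue that, for any edge of the witness tree, the carrier vertex of the propagation step is determined by the parent edge together with the systematic update order, thereby eliminating the vertex-multiplicity factor of $k$ that appears in \Cref{thm:asym-main}. Once this refined one-step estimate is in place, the remaining ingredients---CFTP correctness, monotone-coupling coalescence detection, and the translation of expected witness-tree size into expected coupling time---transfer from the proof of \Cref{thm:asym-main} with only routine bookkeeping changes.
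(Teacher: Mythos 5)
Your plan diverges from the paper's at exactly the point where you acknowledge ``the main obstacle,'' and that obstacle is in fact a genuine gap that your argument does not close. You assert that along a tree edge $(C,C')$ the carrier vertex and the intermediate hyperedge are ``pinned down by the tree structure and the scan schedule,'' so that the per-node weight drops from $2k\cdot2^{-k}$ to $2\cdot2^{-k}$. This is not true. A witness-graph vertex is a pair $e_{C,t}$, and for a fixed label $C$ there are $|C|=k$ admissible timestamps $t$ inside any window of length $n$ (one for each vertex of $C$ being scanned), so even after fixing the label sequence of a path, the timestamps -- and thus the actual witness-graph vertices -- are free to the tune of a $\Theta(k)$ multiplicity per node; this is exactly what \Cref{branchingnumber} records. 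Likewise, the intermediate hyperedge $C^*$ and the two shared vertices involved in a distance-2 step are not determined by $(C,C')$. If you could remove this factor of $k$, your arithmetic would close, but you cannot: accounting for it faithfully brings you back to an offspring bound of order $k^3d^2\cdot 2^{-k}$, which only yields $d\lesssim 2^{k/2}/k^{1.5}$ -- the same as the naive symmetric instantiation of \Cref{thm:asym-main} you were trying to beat. As a small additional point, the per-node open probability is $2^{-k}$, not $2\cdot2^{-k}$; only the all-ones pattern can certify a blocked recoloring in the monotone coupling for independent sets.

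The paper obtains the $\sqrt k$ saving by an entirely different mechanism (\Cref{lem:sym}): it keeps the $k$-fold timestamp multiplicity but conditions step by step and distinguishes three cases by the overlap pattern of consecutive labels. When $|C(u_2)\cap C(u_1)|\le\alpha k$ (``good''), conditioning on $u_1$ being open leaves $(1-\alpha)k$ fresh coordinates in $u_2$, giving a boosted conditional probability $2^{-(1-\alpha)k}$. When the overlap is large, one jumps to $u_3$; if $u_3$ still touches $C(u_1)$ (type I) the count is small, and if $u_3$ escapes $C(u_1)$ (type II) a linear-programming bound (\Cref{lem:opt}) shows that the large overlap of $u_2$ with $u_1$ forces small overlaps with the many candidate $u_3$'s, again controlling the sum. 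The free parameter $\alpha$ is then tuned (roughly $\alpha\approx 1/2+\frac{2-\log k}{k}$) to balance the good and type-II contributions. To repair your proposal you would have to either justify the determinism you assume (which I do not see how to do, since it fails already for a single step) or replace the uniform per-edge bound with something like the paper's conditional overlap analysis.
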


Our analysis inductively enumerates discrepancy paths, taking into account the structure of the overlapping between hyperedges. 

A hypergraph is \emph{linear} if no two hyperedges share more than one vertex. We show that our perfect sampler is efficient on $k$-uniform $d$-regular linear hypergraphs. The bound is within a $O\tp{\frac{1}{k}}$ factor to the uniqueness threshold~\cite{BGGGS19}, mathcing the mixing condition for Glauber dynamics in \cite{HSZ19}.

\begin{theorem}\label{thm:sym-main-liner}
	 Let $\+G$ collect all $k$-uniform $d$-regular linear hypergraphs such that
    \[
		d \leq \frac{(1-\eps)}{4} \cdot \frac{2^k}{k^2}
    \]
	for some constant $\eps\in(0,1)$. There exists an algorithm such that given as input a linear hypergraph $\Phi\in\+G$, outputs a hypergraph independent set of $\Phi$ uniformly at random, in expected time $O \left(-\frac{1}{\log(1-\eps)}\cdot k^2d^2n \cdot( \log n+\log d) \right)$.
\end{theorem}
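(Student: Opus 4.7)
The plan is to retain the CFTP / witness-graph framework developed for \Cref{thm:sym-main} and to replace only the enumeration of discrepancy paths by one that exploits linearity. The systematic-scan grand coupling, the reduction of coalescence detection to percolation in the witness graph, and the bookkeeping that converts a geometric decay of the percolation probability into the stated expected running time $O\!\left(-\frac{1}{\log(1-\eps)}\cdot k^{2}d^{2}n\cdot(\log n+\log d)\right)$ all carry over unchanged. Thus the task reduces to showing that, under the weaker condition $d\le \frac{(1-\eps)}{4}\cdot\frac{2^{k}}{k^{2}}$, the expected number of length-$\ell$ discrepancy paths in the witness graph decays geometrically in $\ell$.

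The single place where linearity should enter is the joint probability that consecutive hyperedges $C_{i},C_{i+1}$ along a discrepancy path are simultaneously violated by the grand coupling. In the general case of \Cref{thm:sym-main} these two events may share up to $k-1$ random coordinates, which is precisely what prevents the analysis from going beyond the $2^{k/2}/k$ regime: one can only charge a per-step factor of roughly $2^{-k/2}$ to such a ``squared'' bad event. Under linearity, $|C_{i}\cap C_{i+1}|\le 1$, so $C_{i+1}\setminus C_{i}$ consists of $k-1$ coordinates whose coupling-values are conditionally uniform given the values on $C_{i}$. Each additional hyperedge on the path therefore contributes a fresh factor of order $2^{-k}$ rather than $2^{-k/2}$ to the joint bad-probability, i.e.\ near-independence of the bad events.

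Combined with the standard count of at most $(dk)^{\ell}$ length-$\ell$ walks of hyperedges rooted at a given hyperedge in a $k$-uniform $d$-regular hypergraph, the expected number of length-$\ell$ discrepancy paths is bounded by $|\C|\cdot\bigl(A\cdot dk\cdot 2^{-k}\bigr)^{\ell}$ for some absolute constant $A$. Substituting the hypothesis gives $A\cdot dk\cdot 2^{-k}\le \frac{A(1-\eps)}{4k}$, so after absorbing the constant into $\eps$ the series is dominated by a geometric one with ratio $O(1/k)$, which is well inside the window required by the CFTP bookkeeping inherited from the proof of \Cref{thm:sym-main} and yields the claimed running time.

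The main obstacle, and essentially the only step requiring new work, is to make the ``fresh $k-1$ coordinates'' argument rigorous within the reveal order induced by the witness graph. One must show that, as the coupling-values are exposed in the chronological order imposed by a discrepancy path, the $k-1$ vertices of $C_{i+1}\setminus C_{i}$ are conditionally uniform given the entire exposed history of the path. This should follow from the product structure of the systematic-scan coupling together with linearity, which guarantees that the only vertex of $C_{i+1}$ whose coupling-value has already been exposed lies in the singleton intersection $C_{i}\cap C_{i+1}$; some care is needed to exclude unwanted dependencies introduced by earlier hyperedges $C_{j}$ ($j<i$) that also meet $C_{i+1}$, but each such contact is again limited to at most one vertex by linearity. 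Once this conditional independence is in place, the remainder of the argument is a direct transcription of the proof of \Cref{thm:sym-main}.
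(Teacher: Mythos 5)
Your high-level plan (keep the CFTP/witness-graph framework and only redo the percolation estimate using linearity) matches the paper's strategy, but the central claim in your enumeration is not correct as stated, and the gap is exactly where the paper does the real work.

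You assert that, because $\lvert C_i\cap C_{i+1}\rvert\le 1$ in a linear hypergraph, each additional hyperedge on a discrepancy path exposes $k-1$ fresh coordinates and hence contributes a probability factor of order $2^{-k}$. This fails because consecutive witness-graph nodes $u_i=e_{C_i,t_i}$ and $u_{i+1}=e_{C_{i+1},t_{i+1}}$ may carry the \emph{same} hyperedge label, $C_i=C_{i+1}$, with different time coordinates $t_{i+1}<t_i$. Linearity gives $\lvert C\cap C'\rvert\le 1$ only for \emph{distinct} hyperedges; it gives nothing when $C(u_{i+1})=C(u_i)$. In that case the two timestamp sets can share up to $k-1$ elements, so the ``fresh'' contribution of $u_{i+1}$ can be as weak as $2^{-1}$. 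A naive per-step product of $O(dk^2)\cdot 2^{-k}$ therefore does not hold for such a step, and the geometric decay you invoke breaks down.

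The paper (Lemma~\ref{lem:symlinear}) handles this with a two-step induction that is precisely tailored to this issue. Setting $\alpha=1/k$, the case ``$u_2$ good'' (i.e.\ $\lvert C(u_2)\cap C(u_1)\rvert\le 1$) covers all $u_2$ with $C(u_2)\ne C(u_1)$, and there one truly gets $\ge k-1$ fresh bits, giving the term $k^2 d\cdot 2^{-(k-1)}$ from~\eqref{eqn:good}. In the remaining case ``$u_2$ bad'' one necessarily has $C(u_2)=C(u_1)$, so one skips to $u_3$: since the path is induced, $u_3\cap u_1=\emptyset$, so $u_3$ contributes a clean $2^{-k}$; and $C(u_3)\cap C(u_1)\ne\emptyset$ forces this to be the type-I branch~\eqref{eqn:typeI} with term $2k^2 d\cdot 2^{-k}$. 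Crucially, type II is impossible in the linear case, which is what makes the $2^k/k^2$ threshold attainable. Your proposal does not contain this two-step skip and does not argue that type-II transitions disappear; without these, you cannot recover a uniform geometric factor.

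Two smaller accounting points. First, the walk count per step in the witness graph is $\Theta(k^2 d)$, not $\Theta(kd)$: besides choosing a vertex of $C_i$ ($k$ ways) and a hyperedge through it ($d$ ways), one must also choose the timestamp of $u_{i+1}$ among $O(k)$ possibilities (Lemma~\ref{branchingnumber}). The extra factor of $k$ is exactly why the threshold is $2^k/k^2$ rather than $2^k/k$; a $(dk)^\ell$ count, taken literally, would mislead. Second, the ``reveal order / earlier $C_j$'' worry you flag is already resolved by the \emph{induced path} property of the witness graph: non-consecutive path vertices have disjoint timestamp sets, so dependence never reaches back further than one step. The real obstacle is not reveal order but the same-label collision described above.
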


\subsection*{Relation to sampling solutions of general CNF formulas} 

The problem of sampling hypergraph independent sets is a special case of sampling the solutions of general CNF formulas or CSP instances which draw a lot of recent attention \cite{Moitra19,FGYZ21,JPV21,HSW21,JPV20,GLLZ19,GJL19,GGGY21,FHY21,FGW22,HWY22,GGGH,CMM,HWY22r}. For general $k$-CNF formulas where each variable is of degree $d$ (corresponding to the $k$-uniform $d$-regular hypergraphs here), the best condition needed for an efficient sampler is $d<\frac{1}{\-{poly}(k)}\cdot 2^{\frac{k}{40/7}}$~\cite{JPV21,HSW21} which is much worse than the condition $d<c\cdot 2^{\frac{k}{2}}$ for hypergraph independent sets. A main reason is that for general CNF formulas, the state space of the local Markov chain is no longer connected, and therefore one needs to project the chain onto a sub-instance induced by some special set of variables. The loss of the projection step, however, is not well-understood. On the other hand, this is not an issue for hypergraph independent sets, and therefore (almost) optimal bounds can be obtained.

We remark that the technique developed here for non-uniform graphs can also be applied to general CNF formulas to analyze the projection chain.  

\section{Preliminaries}\label{sec:prelim}
\subsection{Hypergraph independent sets}
    Recall in the introduction we mentioned that a set $S\subseteq V$ is an independent set of a hypergraph $\Phi=(V,\C)$ if $S\cap C\ne C$ for every $C\in \C$. Sometimes we represent it as a (binary) coloring $\sigma \in \set{0,1}^V$, which is the indicator for $S$. That is, we say $\sigma$ is an hypergraph independent set if $\forall C\in \C, \exists v\in C: \sigma(v)=0$. Denoting by $\Omega_{\Phi}$ the collection of all independent sets of $\Phi$, our goal is to efficiently produce a sample from the uniform measure $\mu$ on $\Omega_{\Phi}$.
    
    Let us fix some notations used throughout our discussion:
    \begin{itemize}
        \item We assume $V := [n]$ and $m := \abs{\C}$. We denote the degree of a vertex $v \in V$ as $d_v := \abs{ \set{C \in \C: v \in C} }$ and the maximum degree as $d := \max_{v\in V} d_v$. A hypergraph is $d$-regular if every vertex is of degree $d$ and is $k$-uniform if every hyperedge is of size $k$.
        
        \item For any $C \in \C$, we use $\Gamma^{+}_{\Phi^2}(C)$ to denote the set of hyperedges $C'$ such that either $ C \cap  C' \neq \emptyset$ or there exists $C^{*}$ such that $ C \cap  C^{*} \neq \emptyset$ and $ C^{*} \cap  C' \neq \emptyset$. Furthermore, let $\Gamma_{\Phi^2}(C):= \Gamma^{+}_{\Phi^2}(C) \setminus \set{C}$.
        
        \item For any coloring $\sigma \in \set{0,1}^V$, we use $\sigma^{v\gets r}$ to denote the coloring obtained after recoloring $v \in V$ by $r \in \set{0,1}$ in $\sigma$.
    \end{itemize}

\subsection{Systematic scan and coupling from the past}

    Fix a hypergraph $\Phi=(V,\C)$ and the uniform distribution $\mu$ over $\Omega_\Phi$. We use the so-called \emph{systematic scan} Markov chain to sample independent sets from $\mu$.
    
    Let us define a transition map $f: \Omega_\Phi \times V \times \set{0,1} \to \Omega_\Phi$ by
    \[
        f(\sigma; v, r) :=
		\begin{cases}
            \sigma^{v \gets r} & \mbox{if } \sigma^{v \gets r} \in \Omega_\Phi; \\
            \sigma  & \mbox{otherwise}.
        \end{cases}
    \]
    It takes the given coloring $\sigma \in \Omega_\Phi$ and tries to recolor vertex $v$ with the proposed $r$. Next, we fix a deterministic \emph{scan sequence} $v_1, v_2, \dots$ where $v_i := (i \bmod n) + 1$ and write
    \[
        F(\sigma;r_1,\dots,r_t):=f(f(\cdots f(\sigma;v_1,r_1)\cdots);v_t,r_t).
    \]
    Basically, it begins with the given coloring $\sigma \in \Omega_\Phi$ and runs $f$ for $t$ steps to update vertex colors. The vertices are updated periodically as specified by the scan sequence; the proposed colors for every step are provided by the arguments $r_1,\dots, r_t$. For a collection of initial states $S\subseteq \Omega_{\Phi}$, we use $F(S;r_1,\dots,r_t)$ to denote the set $\set{F(\sigma;r_1,\dots,r_t)\cmid \sigma\in S}$.

    \begin{definition}
        The \emph{systematic scan} is a Markov chain $(X_t)$ defined by
        \[ X_t := F(\sigma; R_1, \dots, R_t) \quad \forall t \]
        for some $\sigma \in \Omega_\Phi$ and some independent $\text{Bern}(1/2)$ variables $R_1, R_2, \dots$.
    \end{definition}

    The systematic scan is not a time-homogeneous Markov chain. However, we can (and will) bundle every $n$ steps into an atomic round so that the bundled version -- denoting its transition matrix $P_\Phi$ -- is homogeneous. It is easy to check that $P_{\Phi}$ is irreducible and aperiodic with stationary distribution $\mu$. 

    \bigskip
    Given a Markov chain with stationary $\mu$, in the usual Markov chain Monte Carlo method, one obtains a sampler for $\mu$ as follows: Starting from some initial $X_0$, simulate the chain for $t$ steps with sufficiently large $t$ and output $X_t$. The fundamental theorem of Markov chains says that if the chain is finite, irreducible, and aperiodic, then the distribution of $X_t$ converges to $\mu$ when $t$ approaches infinity. However, since we always terminate the simulation after some fixed finite steps, the sampler obtained is always an ``approximate'' sampler instead of a ``perfect'' one.

    The work of \cite{PW96} proposed an ingenious method called \emph{coupling from the past} (CFTP) to simulate the given chain in a reverse way with a random stopping time. A perfect sampler for $\mu$ can be obtained in this way. Roughly speaking, it essentially simulates an infinitely long chain using only finitely many steps. To achieve this, it relies upon a routine to detect whether the (finite) simulation coalesces with the virtual infinite chain. Once the coalescence happens, one can output the result of the simulation -- which is also the result of the infinite chain and thus follows the stationary distribution perfectly.
    
    We use CFTP to simulate our systematic scan and obtain a desired perfect sampler for hypergraph independent sets. The key ingredient to apply CFTP is how to detect coalescence. We describe and analyze our algorithm in \Cref{sec:sampler}.

\section{Perfect Sampler via Information Percolation}\label{sec:sampler}

In this section, we describe our perfect sampler for hypergraph independent sets. With the help of a data structure introduced in \cite{HSW21}, we apply the argument of information percolation to analyze the algorithm. We utilize the monotonicity of the hypergraph independent sets and establish a sufficient condition for the algorithm to terminate.

\subsection{The witness graph}
\label{sec:witness-graph}
In this section, we introduce the notion of witness graph $H_T=(V_T, E_T)$ for the systematic scan up to some time $T \in \^N$. A similar structure was used in \cite{HSW21} for sampling general CNF formulas.

Given a vertex $v \in V$, we denote its last update time up to moment $t$ as
\[
 \!{UpdTime}(v,t) = \max\set{t^*\leq t:  v_{t^*}=v}.
\]
Clearly $\!{UpdTime}(v,t) \in (t-n,t]$ is a deterministic number.

For $C \in \C$ and $t \in [T]$, let $e_{C,t} := \set{ \!{UpdTime}(v,t) : v \in C }$ be timestamps when the elements in $C$ got their latest updates up to time $t$. Conversely, we say $e_{C,t}$ has \emph{label} $C$ and denote it by $C(e_{C,t}) := C$. For convenience, the parenthesis notation $e_{C,t}=(t_1,t_2,\dots,t_{\abs{C}})$ means that $e_{C,t}$ is represented in time order $t_1 < t_2 < \cdots < t_{\abs{C}}$.

The vertex set of the witness graph is given by
\[
    V_T := \set{ e_{C,t} : t \in [T], v_t \in C \in \C}.
\]
We put a directed edge $e_{C,t} \to e_{C',t'}$ into $E_T$ if and only if $t'<t$ and $(e_{C,t} \cap e_{C',t'})\neq \emptyset$. Note that $H_T$ is acyclic, since $\max e > \max e'$ for any directed edge $e \to e'$. We remark that the witness graph is a deterministic object that does not incorporate any randomness. 

\begin{figure}[tbp]
    \centering
    \includegraphics[scale=0.9]{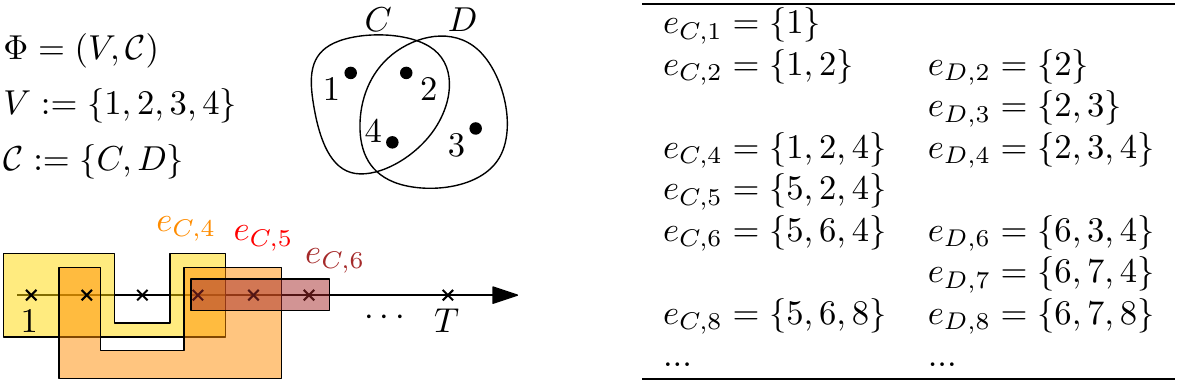}
    \caption{An illustration of the witness graph. In this example, the hypergraph $\Phi$ has four vertices and two hyperedges. We list the vertices of the witness graph in the table on the right. The lower left picture visualizes three vertices in $H_T$; all of them have label $C$.}
    \label{fig:witness-graph}
\end{figure}

The following lemmas measure the number of vertices labeled by a certain $C\in \C$ that are 2-distant from a given vertex in the witness graph. It is useful throughout the enumeration in later discussions.

\begin{lemma}\label{branchingnumber}
    For any $e_{C,t}\in V_T$ and $C'\in \C$, we have \[
        \abs{\set{e_{C',t'} \in V_T: \mathrm{dist}(e_{C,t},e_{C',t'})=2}}\leq 2\abs{C'}.
    \]
\end{lemma}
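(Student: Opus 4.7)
The plan is to exploit the periodic structure of the systematic scan: since $v_i := (i \bmod n) + 1$, each vertex $w \in V$ is updated precisely at times forming an arithmetic progression with common difference $n$. I will translate the distance-$2$ condition into a narrow time window for $t'$, and then count how many update times of vertices of $C'$ fit into that window.

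First, to translate the distance-$2$ constraint, consider a directed path $e_{C,t} \to e_{C'',t''} \to e_{C',t'}$ realizing $\mathrm{dist}(e_{C,t}, e_{C',t'}) = 2$; by the edge definition in the witness graph this forces $t' < t'' < t$. Pick shared timestamps $\tau_1 \in e_{C,t} \cap e_{C'',t''}$ and $\tau_2 \in e_{C'',t''} \cap e_{C',t'}$. Since every entry of $e_{S,s}$ lies in $(s-n, s]$ (each vertex is refreshed within every window of $n$ consecutive scan steps), we get $\tau_1 > t-n$ and $\tau_1 \le t''$, hence $t'' > t-n$; and $\tau_2 > t''-n$ and $\tau_2 \le t'$, hence $t' > t''-n > t-2n$. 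Combined with $t' < t$, this places $t'$ in the open window $(t-2n,\,t)$ of length exactly $2n$.

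Next, I count the admissible values of $t'$ in this window. A vertex $e_{C',t'} \in V_T$ requires $v_{t'} \in C'$, so $t'$ must be an update time of some $w \in C'$. Fix $w \in C'$: the update times of $w$ form an AP with common difference $n$, and any open interval of length $2n$ contains at most $2$ terms of such an AP (three consecutive terms span exactly $2n$ and so cannot lie strictly inside). Summing over the $\abs{C'}$ choices of $w$ yields at most $2\abs{C'}$ admissible $t'$, and hence at most $2\abs{C'}$ vertices $e_{C',t'}$ at distance $2$ from $e_{C,t}$.

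I do not anticipate any serious obstacle beyond the translation from the combinatorial distance-$2$ condition to the time window; the counting step is immediate from the periodicity of the scan. The only point of care is the use of the directed structure of the witness graph (edges point backward in time), which forces the monotone chain $t' < t'' < t$ and thereby produces the sharp window length of $2n$ and the resulting factor of $2$.
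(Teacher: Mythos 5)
Your proof is correct and follows the same route as the paper: you translate the directed distance-$2$ condition into the time window $t' \in (t-2n, t)$ and then use the periodicity of the scan to bound the number of admissible timestamps by $2\abs{C'}$, which is exactly the paper's (much terser) argument made explicit. The paper additionally states the vacuous case $C' \notin \Gamma^+_{\Phi^2}(C)$ separately, but your argument handles it implicitly.
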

\begin{proof}
    If $C'\notin \Gamma^+_{\Phi^2}(C)$ then $\abs{ \set{e_{C',t'} \in V_T: \mathrm{dist}(e_{C,t},e_{C',t'})=2}}=0$. Otherwise, $t'\in [t-2n+2,t)$, and there are at most $2\abs{C'}$ timestamps $e_{C',t'}$ satisfying $t'\in [t-2n+2,t)$.
\end{proof}


\subsection{Coalescence and percolation}
For any $L \in \^N$ we fix $T = T(L) := n(L+1)$. On top of the witness graph $H_T$ we define a probability space as follows. We tie an independent $\text{Bern}(1/2)$ variable $B_t$ to each time point $t \in [T]$. We say a vertex $e \in V_T$ is \emph{open} if $B_t=1$ for all $t\in e$, and call a set of vertices $P\subseteq V_T$ \emph{open} if all $e\in P$ are open. The event $\B_L$ is defined as
\begin{quote}
    ``$H_T$ contains an induced open path $P = (e_1, \dots, e_L) \subseteq V_T$ of length $L$ where $e_1 \cap (T-n,T]\ne\emptyset$''.
\end{quote}

For the coming few pages, the notation $\B_L(R_1, \dots, R_T)$ indicates that we are using (external) random variables $R_1, \dots, R_T$ as \emph{concrete realizations} of our abstract variables $B_1, \dots, B_T$. Needless to say, $R_1, \dots, R_T$ themselves should be independent $\text{Bern}(1/2)$ for such notation to make sense.

Starting from \Cref{sec:percolation}, however, we will switch back to the abstract setting and sweep the concrete realization under the rug.

Recall $F$ is the transition map for our systematic scan.  
\begin{lemma}\label{lem:detection} 
    Assume $L \in \^N$, $T := n(L+1)$ and $R_1, \dots, R_T$ are independent $\text{Bern}(1/2)$ variables. If $\abs{F(\Omega_\Phi; R_1,\dots,R_T)} > 1$ then $\B_L = \B_L(R_1, \dots, R_T)$ happens. 
\end{lemma}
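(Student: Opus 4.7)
The plan is to trace the discrepancy backwards through the systematic scan, building an open path in $H_T$ vertex by vertex.

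First, pick two initial configurations $\sigma_A, \sigma_B \in \Omega_\Phi$ whose trajectories under $F(\cdot; R_1, \ldots, R_T)$ produce different final states, and let $v_0$ be a vertex at which the two final colorings disagree. Set $t_1 := \!{UpdTime}(v_0, T) \in (T-n, T]$. The key monotonicity fact is that the proposal $R = 0$ always succeeds, because setting any vertex to $0$ preserves the independent-set property; hence the disagreement at $v_0$ after $t_1$ forces $R_{t_1} = 1$, and the update must succeed in one chain and fail in the other. The failure is witnessed by some hyperedge $C_1 \ni v_0$ all of whose vertices are colored $1$ in the ``blocked'' chain right before $t_1$; the success in the other chain produces some $v_1 \in C_1 \setminus \{v_0\}$ colored $0$ there, yielding a fresh discrepancy at $v_1$ at time $t_1 - 1$.

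Now I iterate: given a discrepancy at $v_i$ at time $t_i - 1$, set $t_{i+1} := \!{UpdTime}(v_i, t_i - 1)$ and apply the same reasoning to obtain a hyperedge $C_{i+1} \ni v_i$ (with $v_{t_{i+1}} = v_i$) and a new discrepancy vertex $v_{i+1} \in C_{i+1} \setminus \{v_i\}$. Because $t_{i+1} \in (t_i - n, t_i)$, after $L$ iterations one still has $t_L > T - Ln \geq n > 0$, so the trace stays inside $[1, T]$ thanks to the choice $T = n(L+1)$. The vertices $e_i := e_{C_i, t_i}$ are distinct (the $t_i$ strictly decrease), lie in $V_T$ (since $v_{t_i} = v_{i-1} \in C_i$), and consecutive ones are adjacent in $H_T$ because they share the timestamp $t_{i+1}$ (the last update time of $v_i$ in both tuples). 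Each $e_i$ is open: for every $u \in C_i$ the ``blocked'' chain colors $u$ with $1$ just before $t_i$, which by the $R = 0$-always-succeeds principle forces $R = 1$ at $\!{UpdTime}(u, t_i)$. The boundary condition $e_1 \cap (T-n, T] \ni t_1$ is automatic.

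The one technicality I expect to be the real obstacle is enforcing that the path be \emph{induced}. A chord $e_i \to e_j$ with $j > i + 1$ can arise when a vertex of $C_i$ happens to share a last-update time with a vertex of $C_j$, which is possible because consecutive $t_i$ differ by less than $n$. I plan to deal with this either by choosing the pivot $v_{i+1}$ greedily to avoid collisions with timestamps already selected, or by running the trace for slightly more than $L$ rounds and then shortcutting across every chord and re-extending until an induced subpath of length $L$ remains. Either approach should go through comfortably because the horizon $T = n(L+1)$ leaves ample slack in the number of available iterations.
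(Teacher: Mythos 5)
Your approach mirrors the paper's: trace the discrepancy back in time using the monotone fact that proposal $R=0$ always succeeds, building an open walk of witness-graph vertices $e_i = e_{C_i,t_i}$ where each step picks a blocking hyperedge and a new disagreement vertex one round earlier. The paper packages the monotonicity into a dominating chain $Y_t$ started from the all-ones state, which makes the openness claim cleaner, but the substance is the same. One small inaccuracy to fix: in the blocked chain the pivot $v_{t_i}$ is colored $0$ just before time $t_i$ (that is precisely why the update is blocked), so your claim that ``for every $u \in C_i$ the blocked chain colors $u$ with $1$ just before $t_i$'' fails at $u = v_{t_i}$; the timestamp $t_i$ is nevertheless open because the disagreement at $v_{t_i}$ already forces $R_{t_i}=1$, and the remaining timestamps of $e_i$ are handled by your argument applied to $C_i \setminus \{v_{t_i}\}$.

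The genuine gap is the induced-path step, which you correctly identify but do not resolve. Neither proposed fix is a proof: the greedy pivot idea is unsubstantiated, and ``run slightly more than $L$ rounds, shortcut chords, then re-extend'' is not a well-defined procedure; the intuition that $T=n(L+1)$ leaves ``ample slack'' is also misleading, since over $L$ rounds the slack is exactly one round. The clean fix (and what the paper does) is: run the trace not for a fixed number of rounds but until the current time drops below $n$, producing a walk $P^\circ = (e_1,\dots,e_r)$, then take a \emph{shortest} path $P$ from $e_1$ to $e_r$ inside $H_T[P^\circ]$. Because $H_T$ is acyclic with $\max e$ strictly decreasing along every edge, a shortest path can have no chords — forward chords would shortcut it, backward chords are impossible — so $P$ is automatically induced in $H_T$. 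The length bound then comes from a span argument, not from slack: every vertex $e$ of $H_T$ satisfies $\max e - \min e < n$, and $P$ reaches from $\max e_1 > T-n$ down to $\min e_r < n$, so $\abs{P}\cdot n > T-2n = n(L-1)$ and hence $\abs{P}\ge L$. Adding this argument completes your proof.
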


To prove \Cref{lem:detection} we specify a \emph{grand coupling}, namely a family of Markov chains that share the same random sequence $R_1, \dots, R_T$ provided by the lemma.

For every $\sigma \in \Omega_\Phi$, define a copy of systematic scan $(X_{\sigma,t})_{0 \leq t \leq T}$ by
\[
    X_{\sigma,t} := F(\sigma, R_1, \dots, R_t) \quad \forall 0 \leq t \leq T.
\]

In addition, we define an auxiliary Markov chain $(Y_t)_{0 \leq t \leq T}$ by
\[
    Y_0 := \set{1}^V, \qquad Y_t := Y_{t-1}^{v_t \gets R_t} \quad \forall 1 \leq t \leq T.
\]

The chain $(Y_t)_{0\leq t\leq T}$ dominates the execution of $(X_{\sigma,t})_{0\leq t\leq T}$ by monotonicity:
\begin{proposition}\label{monotonicity}
    For all $\sigma\in \Omega_\Phi$ and $0 \leq t \leq T$, we have $X_{\sigma,t}\leq Y_{t}$.
\end{proposition}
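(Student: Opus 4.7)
The plan is to prove the inequality $X_{\sigma,t}\le Y_t$ (componentwise on $V$) by induction on $t$. The base case $t=0$ is immediate: $X_{\sigma,0}=\sigma\in\{0,1\}^V$ is pointwise dominated by $Y_0=\mathbf{1}^V$. For the inductive step, assume $X_{\sigma,t-1}\le Y_{t-1}$ and consider the update at time $t$, which touches only the single vertex $v_t$ in both chains; for any $v\ne v_t$ we already have $X_{\sigma,t}(v)=X_{\sigma,t-1}(v)\le Y_{t-1}(v)=Y_t(v)$, so the only coordinate that needs attention is $v_t$ itself.

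At $v_t$ I would case-split on the shared random bit $R_t$. If $R_t=1$, then $Y_t(v_t)=1$, so $X_{\sigma,t}(v_t)\le 1=Y_t(v_t)$ holds trivially. If $R_t=0$, then $Y_t(v_t)=0$, and here I want to exploit the crucial monotonicity feature of hypergraph independent sets: flipping a coordinate from $1$ to $0$ cannot violate any constraint, because a hyperedge $C$ is violated only when every vertex in $C$ is colored $1$. Formally, since $X_{\sigma,t-1}\in\Omega_\Phi$, the recolored configuration $X_{\sigma,t-1}^{v_t\gets 0}$ is still in $\Omega_\Phi$, so by the definition of $f$ we have $X_{\sigma,t}(v_t)=0=Y_t(v_t)$. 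Combining the two cases closes the induction.

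The argument is essentially one page of bookkeeping once the right observation is isolated, so I do not anticipate any genuine obstacle. The only subtlety worth spelling out is that the two chains must use literally the same sequence $R_1,\dots,R_T$ (this is the whole point of the grand coupling), so that at each step the bit $R_t$ and the scan vertex $v_t$ entering the update are identical in $X_{\sigma,\cdot}$ and $Y_\cdot$; without this synchronization no monotone domination would be available. I would state this explicitly before launching the induction to make clear why the coupling is well-defined.
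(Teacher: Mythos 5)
Your proof is correct and follows essentially the same route as the paper's: a case split on the shared bit $R_t$, with $R_t=1$ forcing $Y_t(v_t)=1$ and $R_t=0$ forcing $X_{\sigma,t}(v_t)=0$ because downward moves always preserve membership in $\Omega_\Phi$. You spell out the induction and the reason $\sigma^{v_t\gets 0}\in\Omega_\Phi$ more explicitly than the paper, but the substance is identical.
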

\begin{proof}
    Initially, $X_{\sigma,0} = \sigma \leq \set{1}^V = Y_0$ for all $\sigma\in \Omega_\Phi$. At any time $t \geq 1$, all the chains update the same vertex $v_t$ and (i) if $R_t=0$ then $X_{\sigma,t}(v_t) = 0$; (ii) if $R_t=1$ then $Y_t(v_t)=1$. So the ordering $X_{\sigma,t} \leq Y_t$ is preserved throughout.
\end{proof}
    
\begin{proposition}\label{percolation}
    Let $0 \leq t \leq T$ be a time point. If there exist $\sigma,\tau: {X_{\sigma,t}(v_t)\neq X_{\tau,t}(v_t)}$, then there is a hyperedge $C\in \C$ containing $v_t$ such that $C\setminus \set{v_t}$ was fully colored by ``1'' in exactly one of $X_{\sigma,t}$ and $X_{\tau,t}$. Furthermore, $Y_{t}(C)=\set{1}^{C}$.
\end{proposition}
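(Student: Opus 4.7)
The plan is to analyze what kind of event can produce the discrepancy $X_{\sigma,t}(v_t)\ne X_{\tau,t}(v_t)$ under the grand coupling. Since the two chains use the same proposal $R_t$, I would first argue that $R_t=1$ is forced: if $R_t=0$, then the proposed flip $\sigma^{v_t\gets 0}$ is always a valid independent set, so $f$ accepts it in every chain and we would have $X_{\sigma,t}(v_t)=X_{\tau,t}(v_t)=0$, a contradiction. Hence $R_t=1$.

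Next, without loss of generality I would take $X_{\sigma,t}(v_t)=1$ and $X_{\tau,t}(v_t)=0$. The value $X_{\sigma,t}(v_t)=1$ means the transition map $f$ accepted the proposal on the $\sigma$-chain, i.e., $X_{\sigma,t-1}^{v_t\gets 1}\in\Omega_\Phi$. For the $\tau$-chain, note that if $X_{\tau,t-1}(v_t)=1$ then $X_{\tau,t-1}^{v_t\gets 1}=X_{\tau,t-1}\in\Omega_\Phi$, so $f$ would accept and yield $X_{\tau,t}(v_t)=1$; this contradicts our assumption, so $X_{\tau,t-1}(v_t)=0$ and the proposal $R_t=1$ must have been rejected on the $\tau$-chain. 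Rejection means there is a hyperedge $C\ni v_t$ for which flipping $v_t$ to $1$ violates the independent set constraint, i.e., $X_{\tau,t-1}$ already assigns $1$ to every vertex of $C\setminus\{v_t\}$. Conversely, acceptance on the $\sigma$-chain for this same $C$ forces $X_{\sigma,t-1}$ to leave at least one vertex of $C\setminus\{v_t\}$ colored $0$. Because the step $t-1\to t$ only touches $v_t$, the restrictions to $C\setminus\{v_t\}$ of $X_{\sigma,t}$ and $X_{\tau,t}$ coincide with those at time $t-1$, so $C\setminus\{v_t\}$ is fully colored by $1$ in exactly one of $X_{\sigma,t}$ and $X_{\tau,t}$, namely $X_{\tau,t}$.

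For the ``furthermore'' clause I would invoke the monotonicity lemma (Proposition \ref{monotonicity}): since $X_{\tau,t}\le Y_t$ and $X_{\tau,t}(w)=1$ for every $w\in C\setminus\{v_t\}$, we must have $Y_t(w)=1$ on $C\setminus\{v_t\}$. Finally, $Y_t(v_t)=R_t=1$ because the auxiliary chain unconditionally copies the proposal, so $Y_t(C)=\{1\}^C$ as claimed.

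There isn't really a hard obstacle here; the argument is a careful unpacking of the transition map $f$. The one point requiring a small amount of care is the case analysis on $X_{\tau,t-1}(v_t)$, which must be ruled out to conclude that the $\tau$-update was genuinely \emph{blocked} by some saturated hyperedge rather than trivially accepted. Once that is pinned down, the passage from the blocking hyperedge for $\tau$ to a statement about $Y_t$ is immediate from the dominating-chain property already established.
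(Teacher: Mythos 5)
Your proof is correct and follows essentially the same route as the paper's: force $R_t=1$, identify a blocking hyperedge $C$ on the chain that failed to update, and lift the conclusion to $Y_t$ via the monotonicity proposition. You simply fill in a few details the paper leaves implicit (ruling out $X_{\tau,t-1}(v_t)=1$, and explicitly showing the accepting chain leaves some vertex of $C\setminus\{v_t\}$ colored $0$, which gives the ``exactly one'' claim).
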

\begin{proof}
    Since ${X_{\sigma,t}(v_t) \neq X_{\tau,t}(v_t)}$, we know that $R_t=1$ and thus $Y_t(v_t)=1$. According to the definition of transition map $f$, one of the two chains failed to recolor $v_t$ by ``1'' because there exists a hyperedge $C$ such that $C\setminus \set{v_t}$ was fully colored ``1'' in that chain just before time $t$. Hence $Y_t(C\setminus\set{v_t})=\set{1}^{C \setminus \set{v_t}}$ by \Cref{monotonicity}. Putting together we have $Y_t(C)=\set{1}^C$.
\end{proof}

We are now ready to prove \Cref{lem:detection}.

\begin{proof}[Proof of \Cref{lem:detection}]
    Assume $\abs{F(\Omega_\Phi; R_1,\dots,R_T)} > 1$; that is, there exist $\sigma,\tau \in \Omega_\Phi$ and $v\in V$ such that $X_{\sigma,T}(v)\neq X_{\tau,T}(v)$. We inductively construct a path in $H_T$ as follows:
    \begin{enumerate}
        \item Set $i \gets 1$. Let $t_1 := \!{UpdTime}(v,T) \in (T-n, T]$.
        \item While $t_i \geq n$ do the following. Regarding \Cref{percolation} and the fact $X_{\sigma,t_i}(v_{t_i}) \neq X_{\tau,t_i}(v_{t_i})$, there is a hyperedge $C_i \in \C$ containing $v_{t_i}$ such that $C_i\setminus \set{v_{t_i}}$ was fully colored by ``1'' in exactly one of $X_{\sigma,t_i}$ and $X_{\tau,t_i}$. So we may find an earliest time $t_{i+1} \in e_i := e_{C_i,t_i}$ such that $X_{\sigma,t_{i+1}}(v_{t_{i+1}})\neq X_{\tau,t_{i+1}}(v_{t_{i+1}})$ and $t_{i+1} < t_i$. Moreover, $Y_{t_i}(C_i)=\set{1}^{C_i}$ implies that $e_i$ is open. Let $i \gets i+1$ and repeat.
    \end{enumerate}
    
    Note that $t_{i+1} \in e_i$ by definition. On the other hand $t_{i+1} \in e_{i+1} = e_{C_{i+1},t_{i+1}}$ since $v_{t_{i+1}} \in C_{i+1}$. Combining with the condition $t_{i+1} < t_i$, we see $t_{i+1} \in (e_i \cap e_{i+1}) \setminus \set{t_i}$ and consequently $e_i \to e_{i+1}$ is indeed an edge in $H_T$. Therefore the above procedure returns an open (not necessarily induced) path $P^\circ = (e_1, \dots, e_r) \subseteq V_T$ where $r$ is the number of rounds. It starts at vertex $e_1$ that intersects $(T-n,T]$, and ends at vertex $e_r$ that intersects $[1,n)$.
    
    Let $P \subseteq P^\circ$ be a \textit{shortest path} from $e_1$ to $e_r$ in $H_T[P^\circ]$. It must be an induced path in $H_T$ since $H_T$ is acyclic. It is open since $P \subseteq P^\circ$. Finally, every vertex $e \in P$ ``spans'' a time interval of at most $n$, so the length of $P$ is at least $T/n - 1 = L$.
\end{proof}

\subsection{The perfect sampler}
    We are ready to introduce our perfect sampler for hypergraph independent sets. Let $L \in \^N$ be a parameter to be fixed later (roughly in the order of $O(\log m)$) and $T = T(L) := n(L+1)$ as usual. The algorithm follows the standard framework of CFTP.

    \begin{algorithm}[th]
	    \caption{The CFTP sampler of hypergraph independent sets}
	    \label{algo:CFTP}
	    \KwIn{A hypergraph $\Phi=(V,\C)$.}
        \KwOut{An independent set of $\Phi$ sampled uniformly from $\mu$.}
		$j \gets 0$\;
		\Repeat{$\!{Detect}(\Phi,R_{-jT+1},\dots,R_{-(j-1)T})$}{
		    $j \gets j + 1$\;
		    generate independent $\text{Bern}(1/2)$ variables $R_{-jT+1}, \dots, R_{-(j-1)T}$\;
		}
		\Return{$F \left( \set{0}^V; R_{-jT+1}, \dots, R_0\right)$.}
	\end{algorithm}
	
	\begin{algorithm}[H]
	    \caption{The Detect subroutine}
	    \label{algo:detect}
	    \KwIn{A hypergraph $\Phi=(V,\C)$ and a sequence $R_1,\dots,R_T$.}
        \KwOut{Whether $\B_L(R_1, \dots, R_T)$ happens.}
        \ForEach{intersecting $C,C' \in \C$}{
            \ForEach{$t \in [T]: v_t \in C$}{
                \ForEach{$t' \in [t-n,t): v_{t'} \in C'$}{
                    connect $e_{C,t} \to e_{C',t'}$ if they intersect and are open with respect to $R_1, \dots, R_T$\;
                }
            }
        }
		breadth-first search from all $e_{C,t}: C \in \C, t \in (T-n, T]$\;
		\Return{if the path has length $\geq L$.}
    \end{algorithm}

    \begin{theorem}\label{thm:cftp}
        Suppose there exist constant $\eps \in (0,1)$ and $\beta$ such that $\Pr{\B_L}\le \beta\cdot(1-\eps)^L$ for all $L$. With the concrete choice $L := \left\lceil \frac{\log(2\beta)}{-\log(1-\eps)} \right\rceil$, \Cref{algo:CFTP} terminates with probability $1$ and has expected running time $O \left( \frac{\log(2\beta)}{-\log(1-\eps)} \cdot \sum_{C \cap C' \neq \emptyset} \abs{C'}  \abs{C} \right) $. Moreover, its output distribution $\nu$ is exactly $\mu$ upon termination.
    \end{theorem}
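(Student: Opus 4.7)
The plan is to address the three claims of the theorem---termination with probability one, the expected running time bound, and the exactness of the output distribution---by combining \Cref{lem:detection} with the standard CFTP correctness argument.

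For termination and iteration count, I would start by noting that with the chosen $L$ we have
\[
\Pr{\B_L} \leq \beta \cdot (1-\eps)^L \leq \frac{1}{2}.
\]
By the contrapositive of \Cref{lem:detection}, whenever $\B_L$ fails on a freshly generated block of $T$ bits the grand coupling collapses on that block, so the $\mathrm{Detect}$ subroutine witnesses coalescence with probability $\geq 1/2$ per iteration. Since successive iterations consume disjoint independent blocks of randomness, the iteration count $\tau$ is stochastically dominated by a $\mathrm{Geom}(1/2)$ random variable, giving almost sure termination and $\E{\tau} \leq 2$.

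For the running time, I would count the work in a single iteration carefully. The witness subgraph restricted to one block has $O\bigl(L \sum_{C} |C|\bigr)$ vertices and $O\bigl(L \sum_{C \cap C' \neq \emptyset}|C||C'|\bigr)$ edges---the inner loop of \Cref{algo:detect} visits each intersecting pair $(C,C')$ at $(L+1)|C|\cdot|C'|$ choices of $(t,t')$---and the subsequent BFS is linear in this size. Randomness generation and the terminal computation of $F(\set{0}^V; R_{-\tau T+1},\dots,R_0)$ each contribute $O(nL)$ in expectation, which is dominated by the $\mathrm{Detect}$ cost because $\sum_{C \cap C' \neq \emptyset}|C||C'| \geq \sum_v d_v \geq n$. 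Multiplying by $\E{\tau}=O(1)$ gives the stated bound.

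For correctness of the output, I would adapt the classical CFTP argument. Extend the randomness to an imagined doubly-infinite sequence of i.i.d.\ $\text{Bern}(1/2)$ bits and define $\tilde Z := F(\sigma; R_{-jT+1},\dots,R_0)$ for any $\sigma \in \Omega_\Phi$ and any $j \geq \tau$. The coalescence witnessed in block $\tau$ guarantees that the state at time $-(\tau-1)T$ is the same regardless of the initial state at $-\tau T$, so $\tilde Z$ is well-defined and coincides with the algorithm's output $F(\set{0}^V; R_{-\tau T+1},\dots,R_0)$. To identify its distribution, take $\sigma$ sampled from $\mu$ independently of the randomness; by stationarity of the bundled chain, $F(\sigma; R_{-jT+1},\dots,R_0) \sim \mu$ for every finite $j$, and this value equals $\tilde Z$ whenever $j \geq \tau$, so $\tilde Z \sim \mu$. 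The subtlest point, where I would spend the most care, is confirming that checking coalescence only on the \emph{oldest} block (rather than the whole window $[-\tau T+1,0]$) still yields the right output: once every chain collapses to a single state at time $-(\tau-1)T$, the evolution under the already-generated randomness $R_{-(\tau-1)T+1},\dots,R_0$ propagates that state forward deterministically, matching what the idealized infinite chain at stationarity would produce; the rest is routine bookkeeping.
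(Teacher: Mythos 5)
Your proposal is correct and follows essentially the same route as the paper's proof: bound $\Pr{\B_L}\le 1/2$, appeal to \Cref{lem:detection} and independence across blocks to get a $\mathrm{Geom}(1/2)$ domination of the iteration count, account for the $\mathrm{Detect}$ cost per iteration, and then run the standard CFTP coupling argument. The one place you genuinely diverge is in establishing $\nu=\mu$: you start the virtual chain from a stationary $\sigma\sim\mu$ and use exact stationarity of the bundled chain at every finite $j$, whereas the paper starts all virtual chains from the fixed state $\set{0}^V$, invokes the ergodic convergence theorem to get $\nu^s\to\mu$, and bounds $\norm{\nu^s-\nu}\le\Pr{sn<JT}\to 0$. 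Both are textbook CFTP arguments and rely on the same facts (termination a.s.\ and coalescence of the oldest block propagating forward); yours avoids citing the convergence theorem at the cost of needing one more limiting step to pass from ``$\tilde Z$ equals a $\mu$-distributed variable with probability $\to 1$'' to ``$\tilde Z\sim\mu$'', which you gesture at but should make explicit via a total-variation bound as the paper does. Your remark about the terminal evaluation of $F(\set{0}^V;\cdot)$ being dominated is a detail the paper elides; it is fine as long as one notes $\sum_{C\cap C'\neq\emptyset}\abs{C}\abs{C'}\ge\sum_C\abs{C}^2\ge\sum_C\abs{C}=\sum_v d_v$, and the per-step cost of $F$ is $O(d_{v_t})$ with a standard counter data structure.
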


    \begin{proof}
        The $\!{Detect}$ subroutine (\Cref{algo:detect}) essentially constructs the witness graph $H_{T}$ and  decides if  if $\B_L(R_1, \dots, R_T)$ happens. Since the inputs fed by \Cref{algo:CFTP} are always independent $\text{Bern}(1/2)$ variables as required, we have
        \[
            p := \Pr{\!{Detect}(\Phi,R_{-jT+1},\dots,R_{(-j-1)T})} = \Pr{\B_L} \leq \beta \cdot (1-\eps)^L = \frac{1}{2}.
        \]
        where the inequality follows from the assumption of the theorem and \Cref{lem:detection}.
        
        Note that \Cref{algo:CFTP} feeds disjoint (hence independent) sequences into \Cref{algo:detect} in different rounds, so the corresponding return values are independent. Therefore, the total number of rounds $J$ follows geometric distribution $\text{Geom}(1-p)$. In particular $\Pr{J < \infty} = 1$ and $\E{J} = \frac{1}{1-p} \leq 2$. So the algorithm terminates with probability $1$.
        
        Next we analyze the running time of a single call to \Cref{algo:detect}. For each $C,C'$ pair, there are $L \cdot \abs{C}$ valid choices of $t$ and $\abs{C'}$ valid choices of $t'$. Each iteration of the innermost loop can be implemented in constant time, as we may index a vertex $e_{C,t}$ by the pair $(C,t)$ to allow random access and test its openness by looking up a precomputed table. So the nested loop takes
         $O \left( L \cdot \sum_{C \cap C' \neq \emptyset} \abs{C} \cdot  \abs{C'} \right)$
        time to finish.
        The standard breadth-first search consumes time proportional to the number of edges in the witness graph, which is bounded by the same quantity.
        
        Finally, recall that \Cref{algo:CFTP} calls \Cref{algo:detect} $J$ times. But $\E{J} \leq 2$, so the expected running time of our sampler is  $O \left( L \cdot \sum_{C \cap C' \neq \emptyset} \abs{C} \cdot  \abs{C'} \right)$.
        
        We now proceed to show $\nu = \mu$, namely the algorithm outputs perfect sample. Upon termination, the percolation event $\B_L(R_{-JT+1}, \dots, R_{-(J-1)T})$ does not happen due to the loop condition. Hence $\abs{F(\Omega_\Phi; R_{-JT+1}, \dots, R_{-(J-1)T})} = 1$ by \Cref{lem:detection}. But this in particular means
        \begin{equation}
            \abs{F(\Omega_\Phi; R_{-JT+1}, \dots, R_0)} = 1.
            \label{eq:coalescence}
        \end{equation}
        
		For each integer $s \geq 0$, we define a (virtual) copy $\left(Z^{s}_t\right)_{-sn \leq t \leq 0}$ of systematic scan which starts from $\set{0}^V$ at time $-sn$:
		\[
		    Z^s_t := F \left( \set{0}^V; R_{-sn+1}, \dots, R_t \right).
	    \]
		Let $\nu^{s}$ denote the distribution of $Z^{s}_0$. Then $\lim_{s \to \infty} \nu^s = \mu$ by convergence theorem. On the other hand,
		\begin{align*}
			&\Vert \nu^{s} - \nu \rVert \\
			&\leq \Pr{ F \left(\set{0}^V; R_{-sn+1},\dots,R_0 \right) \neq F\left(\set{0}^V; R_{-JT+1},\dots,R_0 \right) }  \tag{coupling}\\
			&\leq \Pr{ sn < JT }  \tag{by (\ref{eq:coalescence})} \\
			&\to 0.  \tag{as $s \to \infty$}
		\end{align*}
		Hence $\nu = \lim_{s \to \infty} \nu^s = \mu$.
    \end{proof}

\subsection{Proofs of \Cref{thm:asym-main,thm:sym-main,thm:sym-main-liner} }

    Armed with \Cref{thm:cftp}, we only need to bound $\Pr{\+B_L}$ under various conditions. We bound the quantity in \Cref{sec:percolation}. As a result, \Cref{thm:asym-main} follows from \Cref{lem:percolation-asym}, \Cref{thm:sym-main} follows from \Cref{lem:percolation-sym}, and  \Cref{thm:sym-main-liner} follows from \Cref{lem:percolation-sym-linear}.

\section{Percolation on Witness Graphs}
\label{sec:percolation}
Let $\Phi=(V,\C)$ be a hypergraph with $\abs{V}=n$ and let $H_T$ be its witness graph for any $T = (L+1) n$, $L \in \^N$. We will analyze the probability $\Pr{\B_L}$ in the abstract probability space, where each time point $t \in [T]$ is associated with an independent $\text{Bern}(1/2)$ variable.

\subsection{General hypergraphs}
\label{sec:asym}
\begin{lemma}\label{lem:percolation-asym}
	If there exists a constant $\eps \in (0,1)$ and a function $x:\C\to(0,1)$ satisfying
	\[
		\forall C\in \C: 2\abs{C}\cdot 2^{-\abs{C}}\le (1-\eps) \cdot x(C)\cdot \prod_{C'\in\Gamma_{\Phi^2}(C)}(1-x(C')),
	\]
	then $\Pr{\B_L}\le \sum_{C\in\C}\frac{x(C)}{1-x(C)}\cdot (1-\eps)^{\lfloor\frac{L+1}{2}\rfloor}$ for all $L \in \^N$.
\end{lemma}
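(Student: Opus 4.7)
My plan is to upper bound $\Pr[\+B_L]$ by isolating the odd-indexed vertices of an induced open path in $H_T$, enumerating them via \Cref{branchingnumber}, and then running a Moser--Tardos-style induction that consumes the asymmetric LLL condition.

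Since a path $P=(e_1,\dots,e_L)$ is induced in $H_T$, any two vertices $e_i,e_j$ with $|i-j|\ge 2$ are non-adjacent in $H_T$, which forces $e_i\cap e_j=\varnothing$ as timestamp sets (otherwise the definition of $E_T$ would produce an extra edge). Thus the odd-indexed vertices $e_1,e_3,\dots,e_{2m-1}$ (with $m:=\lceil L/2\rceil$) are pairwise disjoint, their openness events are mutually independent, and each consecutive pair $(e_{2j-1},e_{2j+1})$ sits at witness-graph distance exactly $2$ (distance $0$ or $1$ would contradict ``induced''). A union bound over these odd sub-tuples, followed by dropping the pairwise-disjointness restriction to enlarge the sum, yields
\[
\Pr[\+B_L]\le \sum_{(e_1,e_3,\dots,e_{2m-1})}\prod_{j=1}^m 2^{-|C_{2j-1}|},
\]
where the sum ranges over tuples whose consecutive members are at distance exactly $2$ in $H_T$ and $e_1\cap(T-n,T]\ne\varnothing$. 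By \Cref{branchingnumber} each extension step contributes at most $\sum_{C'\in\Gamma^+_{\Phi^2}(C_{2j-1})}2|C'|\cdot 2^{-|C'|}$, and the starting vertex contributes at most $|C_1|$ per label. Abbreviating $a(C):=2|C|\cdot 2^{-|C|}$, and letting $g_m(C)$ denote the sum of $\prod_{j=2}^m a(C^{(j)})$ over all label sequences $C=C^{(1)},\dots,C^{(m)}$ with $C^{(j+1)}\in\Gamma^+_{\Phi^2}(C^{(j)})$ (and $g_1\equiv 1$), this becomes
\[
\Pr[\+B_L]\le\tfrac{1}{2}\sum_{C_1\in\+C}a(C_1)\,g_m(C_1).
\]

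The core of the proof is then the inductive estimate
\[
g_m(C)\le \frac{(1-\eps)^{m-1}}{\prod_{C'\in\Gamma^+_{\Phi^2}(C)}(1-x(C'))}.
\]
The base case is immediate. For the step, plug the hypothesis into the recursion $g_{m+1}(C)=\sum_{C'\in\Gamma^+_{\Phi^2}(C)}a(C')\,g_m(C')$ and invoke the LLL condition $a(C')\le(1-\eps)x(C')\prod_{C''\in\Gamma_{\Phi^2}(C')}(1-x(C''))$. The product from the LLL factor cancels all but one term of the denominator of the inductive hypothesis, reducing the required inequality to
\[
\sum_{C'\in\Gamma^+_{\Phi^2}(C)}\frac{x(C')}{1-x(C')}\le\prod_{C'\in\Gamma^+_{\Phi^2}(C)}\frac{1}{1-x(C')},
\]
i.e.\ the elementary $\sum y_i\le\prod(1+y_i)$ with $y(C'):=x(C')/(1-x(C'))>0$.

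Feeding the estimate back and applying the LLL condition once more to the outer factor $a(C_1)$, the same telescoping leaves exactly $x(C_1)/(1-x(C_1))$, producing
\[
\Pr[\+B_L]\le \frac{(1-\eps)^m}{2}\sum_{C_1\in\+C}\frac{x(C_1)}{1-x(C_1)}
\]
with $m=\lceil L/2\rceil=\lfloor(L+1)/2\rfloor$, which matches the claim (the harmless factor $1/2$ can be dropped). The main technical obstacle is guessing the right inductive invariant $h(C)=\prod_{C'\in\Gamma^+_{\Phi^2}(C)}(1-x(C'))^{-1}$: the naive choices $h\equiv 1$ or $h(C)=1/(1-x(C))$ fail to telescope cleanly against the LLL product, and what makes this particular $h$ work is precisely the identity $\sum y_i\le\prod(1+y_i)$. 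A secondary subtlety is confirming that ``induced'' in $H_T$ genuinely forces pairwise-disjoint odd-indexed timestamp sets and consecutive odd pairs at distance exactly $2$, so that independence of openness and \Cref{branchingnumber} can be invoked as above.
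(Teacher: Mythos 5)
Your proof is correct, and at the top level it follows the same strategy as the paper: restrict to the odd-indexed vertices of the induced open path (so that openness events become independent and consecutive ones lie at witness-graph distance exactly $2$), invoke \Cref{branchingnumber} to control the per-step enumeration, and burn the asymmetric LLL-type hypothesis to make the geometric factor $(1-\eps)$ emerge. The decomposition into $\+P^2_{C,\lfloor(L+1)/2\rfloor}$ and the accounting of the starting vertex ($\le |C_1|$ choices in the last period) are exactly what appears in the paper.

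Where you genuinely diverge is in how the weighted label-sequence count is bounded. The paper introduces a multi-type Galton--Watson branching process over $\C$-labelled trees, computes the exact probability $\mu_{\+T_C}(\tau)$ of producing a given tree (\Cref{lem:GWprob}), bounds the multiplicity of the map $\Psi$ from paths to labelled directed paths (\Cref{lem:multi-map}), and then concludes by observing that $\sum_\tau \mu_{\+T_C}(\tau) \le 1$. You instead prove the weighted count bound directly, by an explicit induction on the invariant $g_m(C) \le (1-\eps)^{m-1} / \prod_{C'\in\Gamma^+_{\Phi^2}(C)}(1-x(C'))$, where the telescoping against the LLL product is closed by the elementary inequality $\sum_i y_i \le \prod_i (1+y_i)$ (with $y_i = x(C')/(1-x(C'))$, so that $1+y_i = 1/(1-x(C'))$). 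This is a more self-contained and arguably more transparent route: it avoids setting up the branching-process machinery and makes explicit which inequality is doing the work (your $\sum y_i \le \prod(1+y_i)$ is precisely the fact, hidden in the paper's argument, that the total GW-probability of all trees is at most one). The paper's route, on the other hand, is more readily extensible to tree-shaped witness structures rather than paths, which is useful in the broader CNF-sampling literature from which the technique is borrowed. Both arguments deliver the same bound, up to the harmless extra factor of $1/2$ that you correctly note can be discarded.
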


For every $C\in \C$, we use $\+P_{C,L}$ to denote the collection of induced paths $(e_1,e_2,\dots,e_L)$ in $H_T$ where $e_1\cap (T-n,T] \ne \varnothing$ and the label of $e_1$ is $C$ (namely $C(e_1) = C$). Let 
\[
	\+P^2_{C,\lfloor\frac{L+1}{2}\rfloor} :=\set{(e_1,e_3,e_5\dots, e_{2\lfloor\frac{L-1}{2}\rfloor+1})\cmid (e_1,e_2,e_3\dots,e_L)\in \+P_C}.
\]
Then by the union bound, we have
\begin{align}
\Pr{\B_L} 
\notag
&\le \sum_{C\in \C}\sum_{P\in \+P_{C,L}}\Pr{P\mbox{ is open}}\le \sum_{C\in \C}\sum_{P\in \+P^2_{C,\lfloor\frac{L+1}{2}\rfloor}}\Pr{P\mbox{ is open}} \\
\label{eqn:p2prob}
&=\sum_{C\in \C}\sum_{(e_1,\dots,e_{\lfloor\frac{L+1}{2}\rfloor})\in \+P^2_{C,\lfloor\frac{L+1}{2}\rfloor}}\prod_{i=1}^{\lfloor\frac{L+1}{2}\rfloor}2^{-\abs{C(e_i)}},
\end{align}
where the last equality follows from the fact that $e_i\cap e_{i+1}=\emptyset$ holds for every pair of consecutive vertices of a path in $\+P^2_{C,\lfloor\frac{L+1}{2}\rfloor}$.

\begin{lemma}\label{lem:p2prob}
If there exists a function $x:\C\to(0,1)$ satisfying
	\[
		\forall C\in \C: 2\abs{C}\cdot 2^{-\abs{C}}\le (1-\eps) \cdot x(C)\cdot \prod_{C'\in\Gamma_{\Phi^2}(C)}(1-x(C')),
	\]
then 
	\[
	\sum_{C\in \C}\sum_{(e_1,\dots,e_{\lfloor\frac{L+1}{2}\rfloor})\in \+P^2_{C,\lfloor\frac{L+1}{2}\rfloor}}\prod_{i=1}^{\lfloor\frac{L+1}{2}\rfloor}2^{-\abs{C(e_i)}}\le \sum_{C\in\C}\frac{x(C)}{1-x(C)}\cdot (1-\eps)^{\lfloor\frac{L+1}{2}\rfloor}.
	\]
\end{lemma}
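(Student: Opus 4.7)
The plan is a two-stage reduction: first bound the weighted sum over paths in $\+P^2_{C,k}$ by a weighted sum over hyperedge \emph{label sequences} in $\C$, then close with an induction in the style of the asymmetric \LLL. The key geometric observation is that two consecutive vertices $e_i, e_{i+1}$ of a path in $\+P^2_{C,k}$ come from vertices at distance $2$ in an induced path of $H_T$, so their labels satisfy $C(e_{i+1}) \in \Gamma^+_{\Phi^2}(C(e_i))$.

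For the enumeration step, I would count, for each label sequence $(C_1,\ldots,C_k)$ with $C_1 = C$ and $C_{i+1}\in\Gamma^+_{\Phi^2}(C_i)$, the number of realizations $(e_1,\ldots,e_k)$. There are at most $|C|$ choices for $e_1$ (one per vertex of $C$ updated in $(T-n,T]$), and by \Cref{branchingnumber} at most $2|C_i|$ choices for each subsequent $e_i$. Writing $p(C) := 2|C|\cdot 2^{-|C|}$, this yields
\[
    \sum_{C\in\C}\sum_{(e_1,\ldots,e_k)\in\+P^2_{C,k}} \prod_{i=1}^k 2^{-|C(e_i)|}
    \le \frac{1}{2} \sum_{C\in\C} H_k(C),
\]
where $H_k(C)$ sums $\prod_{i=1}^k p(C_i)$ over sequences $(C_1,\ldots,C_k)$ with $C_1=C$ and $C_{i+1}\in\Gamma^+_{\Phi^2}(C_i)$.

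The main substance is then to prove by induction on $k$ that $H_k(C)\le \tfrac{x(C)}{1-x(C)}(1-\eps)^k$. The base case $k=1$ uses $p(C)\le (1-\eps)x(C)\prod(1-x(C'))\le (1-\eps)\tfrac{x(C)}{1-x(C)}$. For the inductive step, the recursion $H_k(C) = p(C)\sum_{C'\in\Gamma^+_{\Phi^2}(C)} H_{k-1}(C')$ together with the hypothesis reduces everything to verifying
\[
    p(C) \sum_{C'\in\Gamma^+_{\Phi^2}(C)} \frac{x(C')}{1-x(C')} \le (1-\eps)\cdot\frac{x(C)}{1-x(C)}.
\]
Substituting the \LLL-like bound for $p(C)$, this amounts to
\[
    (1-x(C))\prod_{C''\in\Gamma_{\Phi^2}(C)}(1-x(C''))\cdot \sum_{C'\in\Gamma^+_{\Phi^2}(C)} \frac{x(C')}{1-x(C')} \le 1,
\]
whose left side equals $\sum_{C'\in\Gamma^+_{\Phi^2}(C)} x(C')\prod_{C''\in\Gamma^+_{\Phi^2}(C),\,C''\ne C'}(1-x(C''))$ and is bounded by $1 - \prod_{C'\in\Gamma^+_{\Phi^2}(C)}(1-x(C')) \le 1$ via the classical expansion $\prod_{C'}\bigl((1-x(C'))+x(C')\bigr)=1$.

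Combining the two stages gives $\frac{1}{2}\sum_C \tfrac{x(C)}{1-x(C)}(1-\eps)^k$, which is stronger than needed. The step I expect to require the most care is the inductive computation: the recursion naturally ranges over $\Gamma^+_{\Phi^2}(C)$ (including $C$ itself, since consecutive labels in a $\+P^2$ path may coincide), while the \LLL\ hypothesis only provides a product over $\Gamma_{\Phi^2}(C)$. The ``diagonal'' contribution $C'=C$ is absorbed by applying the classical expansion to the enlarged set $\Gamma^+_{\Phi^2}(C)$ instead of $\Gamma_{\Phi^2}(C)$, at which point the bound closes cleanly.
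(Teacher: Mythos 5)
Your proof is correct and, in fact, gives a slightly sharper constant (an extra factor of $\tfrac{1}{2}$). The two-stage decomposition is sound: the counting step correctly uses at most $|C|$ choices for $e_1$ and at most $2|C_i|$ choices for each subsequent $e_i$ via \Cref{branchingnumber}, and the observation that consecutive labels satisfy $C_{i+1}\in\Gamma^+_{\Phi^2}(C_i)$ (with the self-loop allowed) is exactly right. The inductive closure via the expansion $\prod_{C'\in\Gamma^+_{\Phi^2}(C)}\bigl((1-x(C'))+x(C')\bigr)=1$, dropping the higher-order terms, is a clean way to absorb the factor $\sum_{C'}\tfrac{x(C')}{1-x(C')}$, and you correctly identified that the product from the hypothesis over $\Gamma_{\Phi^2}(C)$ combines with the extra $(1-x(C))$ to give a product over $\Gamma^+_{\Phi^2}(C)$, which is exactly what the expansion needs.

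The route is genuinely different from the paper's. The paper maps $\+P^2_{C,\ell}$ to labelled directed paths in a multi-type Galton--Watson process (\Cref{lem:GWprob}, \Cref{lem:multi-map}), rewrites the weight $\prod_i x(\+L(u_i))\prod_{C'\in\Gamma_{\Phi^2}(\+L(u_i))}(1-x(C'))$ as $\tfrac{x(C)}{1-x(C)}\mu_{\+T_C}(\tau)$, and then closes by noting that $\sum_\tau\mu_{\+T_C}(\tau)\le 1$ because $\mu_{\+T_C}$ is a probability measure. You instead run a direct induction on the length $k$ of the label sequence, verifying the one-step contraction $p(C)\sum_{C'\in\Gamma^+_{\Phi^2}(C)}\tfrac{x(C')}{1-x(C')}\le(1-\eps)\tfrac{x(C)}{1-x(C)}$ by the classical asymmetric-LLL expansion. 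Mathematically the two are twin arguments — the GW normalization $\mu_{\+T_C}\le 1$ is the ``integrated'' form of your per-step inequality — but yours avoids introducing the branching process and is more self-contained. What the paper's framing buys is conceptual: it extends verbatim to general trees (not just paths), and it makes explicit the analogy with Moser--Tardos witness trees that the paper emphasizes in \Cref{sec1}. Your version, by contrast, is shorter and makes the role of the LLL-style condition more transparent.
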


\Cref{lem:percolation-asym} is clearly a consequence of \Cref{lem:p2prob}.  We prove the latter by analyzing a \emph{multi-type Galton-Watson branching process}.

\subsubsection{Multi-type Galton-Watson branching process}

Recall $\Phi=(V,\C)$ is a fixed hypergraph and we assume that there exists a function $x:\C\to(0,1)$ assigning each $C\in\C$ a number in $(0,1)$. A $\C$-labelled tree is a tuple $\tau=(V_\tau,E_\tau,\+L)$ where $\+L:V_\tau\to\C$ labels each vertex in $V_\tau$ with a hyperedge in $\C$. 

Let $C\in \C$ be a hyperedge. Consider the following process which generates a random $\C$-labelled tree with the root labelled with $C$:
\begin{itemize}
	\item First produce a root vertex $u$ with label $C$. Initialize the active set $A$ as $\set{u}$.
	\item Repeat the following until $A$ is empty:
		\begin{itemize}
			\item Pick some $u\in A$.
			\item For every $C'\in \Gamma^{+}_{\Phi^2}(\+L(u))$: create a new child for $u$ labelled with $C'$ \emph{with probability} $x(C')$ independently; Add the new child to $A$.
			\item Remove $u$ from $A$.
		\end{itemize}
\end{itemize}

We let $\+T_C$ be the set of all labelled trees that can be generated by the above process and let $\mu_{\+T_C}$ be the distribution over $\+T_C$ induced by the process.

\begin{lemma}\label{lem:GWprob}
	For every labelled tree $\tau=(V_\tau,E_\tau,\+L)\in\+T_C$, it holds that
	\[
		\mu_{\+T_C}(\tau)= \frac{1-x(C)}{x(C)}\cdot \prod_{v\in V_\tau} x(\+L(v))\cdot\prod_{C'\in\Gamma_{\Phi^2}(\+L(v))}(1-x(C')).
	\]
\end{lemma}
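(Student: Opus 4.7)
The plan is to compute $\mu_{\+T_C}(\tau)$ directly as a product of independent trial probabilities from the branching process, and then to regroup the factors so as to match the claimed expression.

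Fix $\tau=(V_\tau,E_\tau,\+L)\in\+T_C$ and, for each $v\in V_\tau$, let $S_v\subseteq\Gamma^{+}_{\Phi^2}(\+L(v))$ denote the set of labels of $v$'s children in $\tau$. Because the trials at distinct pairs $(v,C')$ are mutually independent, the process outputs exactly $\tau$ iff for every such $(v,C')$ the trial succeeds precisely when $C'\in S_v$. Hence
\[
    \mu_{\+T_C}(\tau)=\prod_{v\in V_\tau}\prod_{C'\in S_v}x(C')\,\cdot\,\prod_{v\in V_\tau}\prod_{C'\in\Gamma^{+}_{\Phi^2}(\+L(v))\setminus S_v}\bigl(1-x(C')\bigr).
\]

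I would then handle the two products separately. For the first (creation) product, every non-root $u\in V_\tau$ appears exactly once, as the unique label-$\+L(u)$ child of its parent in some $S_v$, so it collapses to $\prod_{u\in V_\tau\setminus\{\mathrm{root}\}}x(\+L(u))=\frac{1}{x(C)}\prod_{v\in V_\tau}x(\+L(v))$. For the second (non-creation) product, I would rewrite each factor as the ratio $\prod_{C'\in\Gamma^{+}_{\Phi^2}(\+L(v))}(1-x(C'))\,/\,\prod_{C'\in S_v}(1-x(C'))$, which extends its range to the full $\Gamma^{+}_{\Phi^2}(\+L(v))$, and then split it via $\Gamma^{+}_{\Phi^2}(\+L(v))=\{\+L(v)\}\cup\Gamma_{\Phi^2}(\+L(v))$. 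The global denominator $\prod_v\prod_{C'\in S_v}(1-x(C'))$ collapses, by the same accounting as for the creation product, to $\prod_{u\in V_\tau\setminus\{\mathrm{root}\}}(1-x(\+L(u)))$, which exactly cancels the non-root self-contributions $(1-x(\+L(v)))$ arising from the split, leaving a single surviving $(1-x(C))$ from the root.

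Combining the two steps yields the formula in the lemma, with the prefactor $\frac{1-x(C)}{x(C)}$ arising from the $\frac{1}{x(C)}$ of the creation step and the surviving $(1-x(C))$ of the non-creation step. The calculation is otherwise routine; the main obstacle to watch out for is the bookkeeping associated with the two asymmetries in the setup---the root is inserted deterministically rather than via a successful trial, and the process uses the self-inclusive neighbourhood $\Gamma^{+}_{\Phi^2}$ while the target formula uses the self-exclusive $\Gamma_{\Phi^2}$---both of which combine to produce exactly the $\frac{1-x(C)}{x(C)}$ prefactor.
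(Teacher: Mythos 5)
Your proof is correct and matches the paper's argument in substance: both directly expand $\mu_{\+T_C}(\tau)$ as a product over independent Bernoulli trials (the paper works with $W_v$, the complement of your $S_v$, but that is a notational choice), collapse the ``success'' factors over non-root vertices, and then absorb the $S_v$-dependence by extending the non-creation product over all of $\Gamma^+_{\Phi^2}(\+L(v))$ and cancelling with the per-vertex $(1-x(\+L(v)))$ contributions. The only difference is the order in which you split $\Gamma^+_{\Phi^2}(\+L(v))=\{\+L(v)\}\cup\Gamma_{\Phi^2}(\+L(v))$ versus cancel the denominator, which is immaterial.
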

\begin{proof}
    For a vertex $v\in V_\tau$ we use $W_v \subseteq  \Gamma^+_{\Phi^2}({\+L}(v))$ to denote the set of labels that do not occur as a label of some child nodes of $v$. Then clearly, the probability that the Galton-Watson branching process produces exactly the tree $\tau$ is given by
    \[
       \mu_{\+T_C}(\tau) = \frac{1}{x(C)} \prod_{v\in V_\tau} x({\+L}(v)) \prod_{C' \in W_v}(1-x(C')).
    \]
    In order to get rid of the $W_v$, we can rewrite the expression as 
    \begin{align*}
        \mu_{\+T_C}(\tau) &= \frac{1-x(C)}{x(C)} \prod_{v\in V_\tau} \frac{x({\+L}(v))}{1-x({\+L}(v))}\prod_{C' \in \Gamma_{\Phi^2}^{+}({\+L}(v))}(1-x(C'))\\
         &=\frac{1-x(C)}{x(C)} \prod_{v\in V_\tau}x({\+L}(v)) \prod_{C' \in \Gamma_{\Phi^2}({\+L}(v))}(1-x(C')).
    \end{align*}    
\end{proof}

For every $\ell\ge 1$, we also use $\+T_{C,\ell}$ to denote the set of labelled trees in $\+T_C$ containing exactly $\ell$ vertices. In the following, we slightly abuse notation and say that a labelled tree $\tau=(V_\tau,E_\tau,\+L)\in \+T_C$ (or $\+T_{C,\ell}$) if there exists some $\tau'\in \+T_C$ (or $\+T_{C,\ell}$) such that $\tau$ and $\tau'$ are isomorphic.

\subsubsection{Proof of \Cref{lem:p2prob}}

Let $\ell=\lfloor\frac{L+1}{2}\rfloor$. It follows from the assumption that for every $C\in\C$, 
\begin{align}
\notag	&\sum_{P=(e_1,e_2,\dots,e_{\ell})\in \+P^2_{C,\ell}}\prod_{i=1}^{\ell}2^{-\abs{C(e_i)}} \\
	&\quad\le (1-\eps)^{\ell}\cdot \sum_{P=(e_1,e_2,\dots,e_{\ell})\in \+P^2_{C,\ell}}\prod_{i=1}^{\ell}\frac{1}{2\abs{C(e_i)}}\cdot x(C(e_i))\prod_{C'\in\Gamma_{\Phi^2}(C(e_i))}(1-x(C')).
\end{align}

We now define a mapping $\Psi:\+P^2_{C,\ell}  \to \+T_{C,\ell}$ that maps each $P=(e_1,e_2,\dots,e_\ell)$ to a labelled directed path $\tau=(V_\tau,E_\tau,\+L)$ where 
\begin{itemize}
	\item $V_\tau = \set{u_1,u_2,\dots,u_\ell}$;
	\item $E_{\tau} = \set{(u_i,u_{i+1})\cmid i\in[\ell-1]}$; and 
	\item for every $i\in[\ell]$, $\+L(u_i) = C(e_i)$.
\end{itemize}

The mapping $\Psi$ is not necessarily an injection, but we can bound its multiplicity.

\begin{lemma}\label{lem:multi-map}
	For every labelled tree $\tau = (V_{\tau},E_\tau,\+L)\in \+T_{C,\ell}$ with $V_\tau = \set{u_1,\dots,u_\ell}$ and $E_{\tau} = \set{(u_i,u_{i+1})\cmid i\in[\ell-1]}$, it holds that
	\[
	\abs{\Psi^{-1}(\tau)} \le 2^{\ell-1} \cdot \prod_{i=1}^\ell \abs{\+L(u_i)}.
	\]
\end{lemma}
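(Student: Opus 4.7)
My plan is to reconstruct any preimage $P=(e_1,\dots,e_\ell)\in\Psi^{-1}(\tau)$ vertex by vertex along the path, so that $|\Psi^{-1}(\tau)|$ is bounded by the product of per-step branching factors. Concretely, I would first count the candidates for $e_1$ forced by the initial-time constraint built into $\+P^2_{C,\ell}$, and then for each $i$ count the extensions $e_{i+1}$ consistent with the already-chosen $e_i$ and the prescribed label $\+L(u_{i+1})=C(e_{i+1})$.

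For the first step, I would exploit the condition $e_1\cap(T-n,T]\ne\varnothing$ inherited from the definition of $\+P_{C,L}$. Since every timestamp in $e_1=e_{C(e_1),t_1}$ is at most $t_1$, this forces $t_1\in(T-n,T]$. Within any window of $n$ consecutive steps the systematic scan visits each vertex of $V$ exactly once, so there are at most $|\+L(u_1)|$ valid $t_1$ with $v_{t_1}\in\+L(u_1)$, and hence at most $|\+L(u_1)|$ candidates for $e_1$.

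For the induction step, the crucial input is the observation recorded just after \eqref{eqn:p2prob}: consecutive vertices of any path in $\+P^2_{C,\ell}$ have disjoint timestamp sets, meaning they lie at (directed) distance exactly $2$ in $H_T$, arising as every-other vertex of an induced path $e_1\to\dots\to e_L$. This places us squarely in the scope of \Cref{branchingnumber}, which caps the number of $e_{i+1}$ at distance $2$ from $e_i$ carrying label $\+L(u_{i+1})$ by $2|\+L(u_{i+1})|$.

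Multiplying these local bounds would give $|\Psi^{-1}(\tau)|\le |\+L(u_1)|\cdot\prod_{i=2}^\ell 2|\+L(u_i)|=2^{\ell-1}\prod_{i=1}^\ell|\+L(u_i)|$, as desired. I do not expect any real obstacle; the only bookkeeping subtlety is the asymmetry between the first step (window of width $n$, contributing $|\+L(u_1)|$) and the remaining $\ell-1$ steps (window of width ${<}\,2n$, each contributing $2|\+L(u_i)|$), which together yield the $2^{\ell-1}$ prefactor.
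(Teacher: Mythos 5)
Your proposal is correct and matches the paper's own argument: the paper also bounds $\abs{\Psi^{-1}(\tau)}$ by induction on $\ell$, using the boundary constraint $e_1 \cap (T-n,T] \neq \varnothing$ to get at most $\abs{\+L(u_1)}$ choices for the first vertex, and invoking \Cref{branchingnumber} (noting that consecutive vertices of a path in $\+P^2_{C,\ell}$ are at directed distance exactly $2$, being alternate vertices of an induced path) to get at most $2\abs{\+L(u_{i+1})}$ extensions at each subsequent step. Your framing as a product of per-step branching factors is just a presentational variant of the paper's formal induction; the content is identical.
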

\begin{proof}
	We prove it by induction:
	\begin{enumerate}
	    \item When $\ell =1$, the lemma holds because there are at most $\abs{C}$ timestamps $e$ with label $C$ such that $e\cap {(T-n+1,T]\ne \varnothing}$; 
	    \item We assume the statement holds for $\ell \geq 1$. Let $V_\tau=\set{u_1,u_2,\dots,u_\ell,u_{\ell+1}}$ and $V_{\tau'}=\set{u_1,u_2,\dots,u_\ell}$. According to the definition of $\Psi$, we know that for all $P\in \Psi^{-1}(\tau)$, it was extended from some $P'\in \Psi^{-1}(\tau')$. Therefore, it suffices to analyze the possible extension from $P'$ to $P\in \Psi^{-1}(\tau)$. Assuming $P'=(e_1,e_2,\dots,e_\ell)$ and $P=\set{e_1,e_2,\dots,e_\ell,e_{\ell+1}}$, if $P$ was extended from $P'$, then $\mathrm{dist}(e_\ell,e_{\ell+1})=2$ and $C(e_{\ell+1})=\+L(u_{\ell+1})$. Therefore, there are at most $2\abs{\+L(u_{\ell+1})}$ timestamps $e_{\ell+1}$ satisfying $P\in \Psi^{-1}(\tau)$ according to \Cref{branchingnumber}. Combining with the induction hypothesis, the statement holds for $\ell+1$.
	\end{enumerate}
\end{proof}
Clearly for those $\tau\in \+T_{C,\ell}$ which are not directed paths, its pre-image $\Psi^{-1}(\tau)=\emptyset$. Therefore, it follows from \Cref{lem:GWprob} and \Cref{lem:multi-map} that
\begin{align*}
	&\sum_{P=(e_1,e_2,\dots,e_{\ell})\in \+P^2_{C,\ell}}\prod_{i=1}^{\ell}\frac{1}{2\abs{C(e_i)}}\cdot x(C(e_i))\prod_{C'\in\Gamma_{\Phi^2}(C(e_i))}(1-x(C'))\\
	&\quad \le \sum_{\substack{\tau=(V_\tau,E_\tau,\+L)\in\+T_{C,\ell}:\\V_\tau=\set{u_1,\dots,u_\ell}}}\prod_{i=1}^{\ell} x(\+L(u_i))\prod_{C'\in\Gamma_{\Phi^2}(\+L(u_i))}(1-x(C'))\\
	&\quad =\frac{x(C)}{1-x(C)}\sum_{\tau\in\+T_{C,\ell}}\mu_{\+T_C}(\tau)\\
	&\quad < \frac{x(C)}{1-x(C)}.
\end{align*}

\subsection{Refined analysis for uniform hypergraphs}\label{refined}

If the instance $\Phi=(V,\C)$ is a $d$-regular $k$-uniform hypergraph, we can choose $x(C) = \frac{1}{d^2k^2}$ 
 in \Cref{lem:percolation-asym} and the condition becomes $d\le \frac{c\cdot 2^{k/2}}{k^{1.5}}$ for some constant $c>0$. In this section, we present a refined analysis for this case which removes the denominator $k^{1.5}$.
 
  \begin{lemma}\label{lem:percolation-sym}
 	For all $\eps\in(0,1)$ and $k \geq 2$, if
	\[
		d \leq  \frac{(1-\eps)}{20}\cdot \frac{2^{k/2}}{k}
	\]
	then $\Pr{\B_L}\le \frac{k}{2^k}\cdot m \cdot (1-\eps)^{\lfloor\frac{L}{2}-1\rfloor}$ for all $L \in \^N$.
 \end{lemma}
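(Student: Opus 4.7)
The plan is to follow the framework of \Cref{lem:percolation-asym} but replace the Galton--Watson enumeration with a more refined counting that exploits the uniform-regular structure. By the union bound over induced open paths in $H_T$ and the fact that odd-indexed vertices of such a path are pairwise vertex-disjoint in $V_T$ and each is open independently with probability $2^{-k}$,
\[
\Pr{\B_L} \le \sum_{C\in\C}\sum_{P\in\+P^2_{C,\ell}}\Pr{P\text{ is open}} = 2^{-k\ell}\sum_{C\in\C}\abs{\+P^2_{C,\ell}},
\]
where $\ell := \lfloor(L+1)/2\rfloor$. The problem reduces to bounding $\abs{\+P^2_{C,\ell}}$ more tightly than $k\cdot(2k^3 d^2)^{\ell - 1}$, which would otherwise be produced by combining \Cref{branchingnumber} with the crude $\abs{\Gamma_{\Phi^2}^+(C)}\le k^2 d^2$ and yields only the threshold $d = O(2^{k/2}/k^{3/2})$, matching the asymmetric bound plugged with $x(C) = 1/(k^2 d^2)$.

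The key refinement is an inductive enumeration. Let $N_\ell(e)$ denote the number of distance-2 paths of length $\ell$ in $H_T$ starting at a fixed witness-graph vertex $e$. I would bound $N_\ell(e)$ via the recurrence $N_\ell(e)\le \sum_{e' : \mathrm{dist}(e,e')=2} N_{\ell-1}(e')$, but instead of applying \Cref{branchingnumber} in a label-wise manner, I would regroup the sum by the intermediate label $C^*$ witnessing the distance-2 relation. Concretely, each distance-2 successor $e' = e_{C',t'}$ is reached from $e = e_{C,t}$ through some intermediate $e^* = e_{C^*,t^*}$ with $C\cap C^*\ne\emptyset$ and $C^*\cap C'\ne\emptyset$; a careful enumeration that fixes the pair of shared vertices and uses the per-vertex degree bound $d$, rather than the worst-case product of label count and timestamp count independently, yields a bound of the form $N_\ell(e)\le (C_0\, k^2 d^2)^{\ell - 1}$ for a small absolute constant $C_0$. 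The point is to avoid multiplying the $k^2 d^2$ label count by the $2k$ timestamp count at every step by showing that, after a distance-2 transition, the candidate timestamps of $e'$ are correlated with the choice of $C^*$ in a way that removes one factor of $k$.

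Once the refined bound is in place, the number of valid starting vertices $e_1$ with label $C$ and $e_1\cap(T-n,T]\ne\emptyset$ is at most $k$, so summing over the $m$ labels,
\[
\Pr{\B_L}\le mk\cdot 2^{-k\ell}\cdot(C_0\, k^2 d^2)^{\ell - 1} = \frac{mk}{2^k}\left(\frac{C_0\, k^2 d^2}{2^k}\right)^{\ell - 1}.
\]
Under the hypothesis $d\le \frac{1-\eps}{20}\cdot\frac{2^{k/2}}{k}$ we have $k^2 d^2\le \frac{(1-\eps)^2}{400}\cdot 2^k$, and hence $C_0\, k^2 d^2/2^k\le 1-\eps$ whenever $C_0\le 400/(1-\eps)$; the claim then follows after noting $\ell - 1 \ge \lfloor L/2 - 1\rfloor$. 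The main obstacle is the refined step-count $C_0\, k^2 d^2$: shaving the factor of $k$ requires a delicate argument that couples the intermediate-label structure in the hyperedge dual graph with the timestamp-window constraint from \Cref{branchingnumber}, an interplay that the asymmetric Galton--Watson reduction of \Cref{lem:p2prob} inherently overcounts.
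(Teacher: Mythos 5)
Your reduction to counting $\+P^2$ paths is set up correctly: odd-indexed vertices of an induced path are pairwise disjoint (any two are non-adjacent in the original path, and since times decrease along the path, non-adjacency forces disjointness), so $\Pr{P\text{ is open}} = 2^{-k\ell}$ exactly, matching \Cref{eqn:p2prob}. The gap is the counting step you yourself flag as ``the main obstacle.'' You assert that the distance-two branching number can be reduced from $2k^3d^2$ to $O(k^2d^2)$ by exploiting a correlation between the intermediate label $C^*$ and the admissible timestamps $t'$, but you never prove it, and I do not think it is true. For a vertex $e=e_{C,t}$ with $\abs{\Gamma^+_{\Phi^2}(C)}=\Theta(k^2d^2)$ and with most second-neighbor labels $C'$ disjoint from $C$, \emph{every} one of the up-to-$2k$ candidate timestamps $t'\in(t-2n,t)$ with $v_{t'}\in C'$ gives a valid distance-two successor (disjointness from $e$ is automatic when $C\cap C'=\emptyset$, and an intermediate exists whenever the timestamps line up, which happens for $\Theta(k)$ of them). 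So the branching can genuinely reach $\Theta(k^3d^2)$, and no purely combinatorial bound on $\abs{\+P^2_{C,\ell}}$ saves the extra factor of $k$.

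The paper's proof abandons the $\+P^2$ counting entirely for this symmetric lemma and instead proves \Cref{lem:sym} by induction on $L$, conditioning on $u_1$ being open and classifying the first three vertices of the path. A \emph{good} $u_2$ (overlap $\abs{C(u_2)\cap C(u_1)}\le\alpha k$) earns a strong conditional factor $\Pr{u_2\text{ open}\mid u_1\text{ open}}\le 2^{-(1-\alpha)k}$ against only $k^2d$ choices, and the induction recurses at $u_2$. A \emph{bad} $u_2$ (overlap $>\alpha k$) has only $d/\alpha$ label choices, and one skips to $u_3$: in \emph{type~I} ($C(u_3)$ meets $C(u_1)$) $u_3$ is disjoint from $u_1$ by inducedness so one pays $2^{-k}$; in \emph{type~II} one enumerates $(C_2,C_3)$ and controls $\sum_i 2^{a+b_i-2k}$ by the linear-programming duality of \Cref{lem:opt}. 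Choosing $\alpha=\tfrac12 + \tfrac{2-\log k}{k}$ balances the good and type-II terms to obtain $d=O(2^{k/2}/k)$. It is precisely this probability-versus-combinatorics trade-off, varying step by step with the overlap size, that shaves the $\sqrt{k}$ beyond the asymmetric bound; the fixed $2^{-k}$-per-vertex accounting in your approach cannot see it.
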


 For every induced path $(u_1,\dots,u_\ell)$ in $H_T$, and $L>0$, we use $\+P_{(u_1,\dots,u_\ell),L}$ to denote the collection of induced paths in $H_T$ of length $L$ with prefix $(u_1,\dots,u_\ell)$\footnote{Unlike the notation $\+P_{C,L}$ defined in \Cref{sec:asym}, we do not require $u_1\cap \set{T-n+1,\dots,T}\ne\emptyset$ here.}.
 
 \begin{lemma}\label{lem:sym}
    For all $\eps\in(0,1)$ and $k \geq 2$, if
	\[
		d \leq  \frac{(1-\eps)}{20}\cdot \frac{2^{k/2}}{k}
	\]
 then for every $u\in V_{H_T}$ and every $L \in \^N$,
 \[
 	\Pr{\mbox{$\exists$open $P\in \+P_{(u),L}$} \mid u\mbox{ open}}\le (1-\eps)^{\lfloor\frac{L}{2}-1\rfloor}.
 \]
 \end{lemma}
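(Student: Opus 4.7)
The plan is to argue by induction on $L$. The base cases $L\in\{1,2,3\}$ hold trivially because $(1-\eps)^{\lfloor L/2-1\rfloor}\ge 1$ in each case. For $L\ge 4$ I split an induced path $(u,u_2,u_3,\ldots,u_L)$ into the initial two-step segment $(u,u_2,u_3)$ and the tail $(u_3,u_4,\ldots,u_L)$, which is itself an induced path of length $L-2$ starting at $u_3$. The crucial feature of induced paths in $H_T$ is that every $u_j$ with $j\ge 4$ has timestamp set disjoint from $T_u\cup T_{u_2}$, so conditioning the tail's openness on $u$ and $u_2$ being open changes nothing. A union bound over $(u_2,u_3)$ together with the induction hypothesis then yields
\[
p_L(u)\le S_2(u)\cdot\max_{u_3}p_{L-2}(u_3)\le S_2(u)\cdot(1-\eps)^{\lfloor L/2\rfloor-2},
\]
where $S_2(u):=\sum_{(u_2,u_3)}\Pr{u_2,u_3\ \text{open}\mid u\ \text{open}}$ ranges over length-2 induced extensions of $u$. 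The task reduces to proving $S_2(u)\le 1-\eps$.

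Writing $s_{ij}=|T_{u_i}\cap T_{u_j}|$, I use the identity $\Pr{u_2,u_3\ \text{open}\mid u\ \text{open}}=2^{-(k-s_{12})-(k-s_{23})}$, so that $S_2(u)=2^{-2k}\sum 2^{s_{12}+s_{23}}$. The structural observation driving the refined analysis is that the vertex subsets $V_{12}\subseteq C(u)\cap C(u_2)$ and $V_{23}\subseteq C(u_2)\cap C(u_3)$ realising these overlaps are disjoint inside $C(u_2)$: a common vertex would produce a shared timestamp in $T_u\cap T_{u_3}$, violating the induced-path property. Hence $s_{12}+s_{23}\le k$, which is precisely what prevents the naive subset enumeration inside $C(u_2)$ from contributing a spurious $3^k$ factor.

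To bound $\sum 2^{s_{12}+s_{23}}$ I would enumerate $(u_2,u_3)$ by first choosing candidate overlap sets $(V_{12},V_{23})$ and then counting labels and timestamps. Using $d$-regularity (at most $d$ hyperedges contain any fixed nonempty vertex set) together with $k$-uniformity (given the alignment conditions, at most $k$ timestamps of the chosen label lie in the relevant length-$n$ window), the number of compatible tuples $(C(u_2),t_2,C(u_3),t_3)$ is $O(d^2k^2)$. Combined with the disjointness of $V_{12}$ and $V_{23}$ in $C(u_2)$, which caps the subset enumeration at $2^k$ rather than $3^k$, this produces $\sum 2^{s_{12}+s_{23}}=O(d^2k^2\cdot 2^k)$ and hence $S_2(u)=O(d^2k^2/2^k)$. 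The hypothesis $d\le \frac{(1-\eps)}{20}\cdot\frac{2^{k/2}}{k}$ rearranges to $d^2k^2/2^k\le (1-\eps)^2/400$, which, once the numerical constants are tracked, delivers $S_2(u)\le 1-\eps$.

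The hard part will be executing this joint enumeration carefully enough to save the extra $k^{-1}$ factor relative to the asymmetric analysis of the preceding section (which only yields $d=O(2^{k/2}/k^{3/2})$). Bounding $\sum_{u_2}2^{s_{12}}$ and $\sum_{u_3\sim u_2}2^{s_{23}}$ independently via the single-step estimate $\sum 2^{s_{12}}\le O(dk\cdot 2^k)$ used in \Cref{sec:asym} loses a factor of $k$; recovering the tighter bound requires handling $V_{12}$ and $V_{23}$ together as disjoint subsets of $C(u_2)$ so that their combined size budget of $k$ produces a single $2^k$ factor rather than two independent ones.
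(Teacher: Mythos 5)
Your decomposition $p_L(u)\le S_2(u)\cdot\max_{u_3}p_{L-2}(u_3)$ is valid (the paper performs the same two-step unroll), and the structural observation $s_{12}+s_{23}\le k$ via disjointness of $V_{12},V_{23}\subseteq C(u_2)$ is correct and is indeed implicit in the paper's type-II counting. However, the plan to close the argument by showing $S_2(u)\le 1-\eps$ has a genuine gap, and in fact this bound is false under the hypothesis of the lemma, so the recursion you propose is too lossy.

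Two problems. First, the claimed bound $\sum_{(u_2,u_3)}2^{s_{12}+s_{23}}=O(d^2k^2\cdot 2^k)$ does not come out of the enumeration you sketch. Enumerating the exact overlap sets $V_{12}\subseteq u_1$ and then $V_{23}\subseteq u_2\setminus V_{12}$ weighted by $2^{s_{12}+s_{23}}$ produces a generating function of type $\sum_{a}\binom{k}{a}2^a\cdot 3^{k-a}=5^k$ (or $3^k$ if one enumerates only ordered disjoint pairs inside $u_2$), not $2^k$; the disjointness constraint alone does not collapse two independent subset choices into a single $2^k$. Second, and more fundamentally, $S_2(u)$ can be $\Omega(k\log k)$ even when $d\le c\,2^{k/2}/k$: bucketing $u_2$ by $s_{12}=a$ gives up to $k^2d/a$ choices, then summing over $u_3$ (bucketed by $s_{23}=b$, at most $k-a$ by disjointness) with weights $2^{a+b-2k}$ yields a contribution on the order of $k^2d^2\cdot 2^{-k}\sum_{a}\frac{1}{a(k-a)}\cdot 2^{k}\approx c^2k\log k$, which is unbounded in $k$. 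So no enumeration can save the uniform reduction.

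The paper avoids this by breaking the symmetry between $u_2$ and $u_3$. It splits on whether $u_2$ is \emph{good}, meaning $\abs{C(u_2)\cap C(u_1)}\le\alpha k$. For good $u_2$ it pays $\Pr[u_2\ \text{open}\mid u_1\ \text{open}]\le 2^{-(1-\alpha)k}$ against the loose count $k^2d$ of possible $u_2$'s and recurses from $u_2$ with the induction hypothesis at level $L-1$, never enumerating $u_3$. Only when $u_2$ is bad (so $\abs{C(u_1)\cap C(u_2)}>\alpha k$, which caps the number of candidate labels $C(u_2)$ at $d/\alpha$) does it switch to tracking $u_3$ with $\Pr[u_3\ \text{open}\mid u_1\ \text{open}]=2^{-k}$ and recursion at level $L-2$, further splitting into type I ($C(u_3)\cap C(u_1)\ne\emptyset$, giving at most $kd$ labels for $C(u_3)$) and type II, where a linear-programming duality argument controls $\sum_i 2^{a+b_i-2k}$ using the constraints $b_i\le k-a$ and $\sum_i b_i\le(k-a)d$. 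The key missing idea in your sketch is precisely this asymmetry: paying for $u_2$ alone when its overlap with $u_1$ is small, and reserving the two-step charge for the bad case where the label overlap already constrains the count. Your reduction $p_L\le S_2\cdot p_{L-2}$ insists on paying for both $u_2$ and $u_3$ in every case, and that is why it cannot be closed.
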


 We first show that \Cref{lem:percolation-sym} follows from \Cref{lem:sym}.
 
 \begin{proof}[Proof of \Cref{lem:percolation-sym}]
    By the union bound, we have
    \begin{align*}
        \Pr{\B_L} 
        &\le \sum_{C\in \C}\sum_{e_{C,t}:t\in (T-n,T]}\Pr{\mbox{$\exists$open $P\in \+P_{(e_{C,t}),L}$}}\\
        &= \sum_{C\in \C}\sum_{e_{C,t}:t\in (T-n,T]}\Pr{\mbox{$\exists$open $P\in \+P_{(e_{C,t}),L}$} \mid e_{C,t} \mbox{ open}}\cdot \Pr{e_{C,t} \mbox{ open}}\\
        &\le k m (1-\eps)^{\lfloor\frac{L}{2}-1\rfloor} \cdot {2^{-k}}.
    \end{align*}
 \end{proof}

The remaining part of the section is devoted to a proof of \Cref{lem:sym}. We apply induction on $L$. The base case is that $L=1$ and $L=2$, in which the lemma trivially holds. For larger $L$ and every path $P=(u_1,u_2,u_3\dots,u_L)\in \+P_{(u_1),L}$, we discuss how the vertices $u_1$, $u_2$ and $u_3$ overlap.

Recall for every $u=e_{C,t}\in V_T$, $C(u)=C$ is its label. Similar to \cite{HSZ19}, we classify the tuple $(u_1,u_2,u_3)$ into three categories. 

\begin{itemize}
\item We say $u_2$ is \emph{good} if $\abs{C(u_2)\cap C(u_1)}\le \alpha\cdot k$ where $\alpha\in [0,1]$ is a parameter to be set;
\item If $u_2$ is not good, we say $(u_2,u_3)$ is of \emph{type I} if $C(u_3)\cap C(u_1)\ne\emptyset$;
\item If $u_2$ is not good, we say $(u_2,u_3)$ is of \emph{type II} if $C(u_3)\cap C(u_1)=\emptyset$.
\end{itemize}

Then we can write 
\begin{align}
	&\phantom{{}={}}\Pr{\mbox{$\exists$open $P\in \+P_{(u_1),L}$}\mid u_1\mbox{ open}}\label{eqn:3-cases}\\
	&\le \Pr{\mbox{$\exists$open $P=(u_1,u_2,u_3,\dots)\in \+P_{(u_1),L}:$ good $u_2$}\mid u_1\mbox{ open}}\notag\\
	&\quad+\Pr{\mbox{$\exists$open $P=(u_1,u_2,u_3,\dots)\in \+P_{(u_1),L}:$ type I $(u_2,u_3)$}\mid u_1\mbox{ open}}\notag\\
	&\quad+\Pr{\mbox{$\exists$open $P=(u_1,u_2,u_3,\dots)\in \+P_{(u_1),L}:$ type II $(u_2,u_3)$}\mid u_1\mbox{ open}}\notag
\end{align}

In the following, we bound the probabilities in the three cases respectively.

\subsection*{$u_2$ is good} In this case, we have $\abs{C(u_2)\cap C(u_1)}\le \alpha \cdot k$. Therefore, conditioned on $u_1$ being open, at least $(1-\alpha)\cdot k$ variables in $u_2$ are free (i.e. independent of those variables related to $u_1$). The number of choices of $C(u_2)$ is at most $k\cdot d$ and the number of choices of $u_2$ with fixed $C=C(u_2)$ is at most $k$. Combining this with the induction hypothesis, we have
\begin{align}
	&\phantom{{}={}}\Pr{\mbox{$\exists$open $P=(u_1,u_2,u_3,\dots)\in \+P_{(u_1),L}:$ good $u_2$}\mid u_1\mbox{ open}}\notag\\
	&=\Pr{\bigcup_{\mbox{\scriptsize good $u_2^*$}}\left\{\mbox{$\exists$open $P\in \+P_{(u_1,u_2^*),L}$}\right\}\mid u_1\mbox{ open}}\notag\\
	&\le \sum_{\mbox{\scriptsize good $u_2^*$}}\Pr{\mbox{$\exists$open $P\in \+P_{(u_1,u_2^*),L}$}\mid u_1\mbox{ open}}\notag\\
	&=\sum_{\mbox{\scriptsize good $u_2^*$}} \Pr{u_2^*\mbox{ open}\mid u_1\mbox{ open}}\cdot \Pr{\mbox{$\exists$open $P\in \+P_{(u_1,u_2^*),L}$}\mid (u_1,u_2^*)\mbox{ open}}\notag\\
	&= \sum_{\mbox{\scriptsize good $u_2^*$}} \Pr{u_2^*\mbox{ open}\mid u_1\mbox{ open}} \cdot \Pr{\mbox{$\exists$open $P' \in \+P_{(u_2^*),L-1}$}  \mid u_2^*\mbox{ open}}\notag\\
	&\le k^2d\cdot 2^{-(1-\alpha)k}\cdot (1-\eps)^{\lfloor\frac{L-3}{2}\rfloor}.\label{eqn:good}
\end{align}

\subsection*{$(u_2,u_3)$ is of type I} In this case, we know $C(u_3)$ is adjacent to $C(u_1)$ in $\Phi$ and therefore the number of the choices of $C(u_3)$ is at most $kd$. The choice of $u_3$ with fixed $C=C(u_3)$ is at most $2k$ by \Cref{branchingnumber}. On the other hand, since $P$ is an induced path, $u_1\cap u_3=\emptyset$ and $\Pr{u_3\mbox{ is open}\mid u_1\mbox{ is open}}=2^{-k}$. Combining these facts with the induction hypothesis, we have
\begin{align}
	&\phantom{{}={}}\Pr{\mbox{$\exists$open $P=(u_1,u_2,u_3,\dots)\in \+P_{(u_1),L}:$ type I $(u_2,u_3)$}\mid u_1\mbox{ open}}\notag\\
	&\le\Pr{\bigcup_{\substack{u_3^*:~ \exists u_2\text{ s.t.}\\(u_2,u_3^*)\text{ type I}}}\left\{\mbox{$\exists$open $P' \in \+P_{(u_3^*),L-2}$} \right\}  \mid u_1\mbox{ open}}\notag\\
	&\le \sum_{\substack{u_3^*:~ \exists u_2\text{ s.t.}\\(u_2,u_3^*)\text{ type I}}}\Pr{\mbox{$\exists$open $P' \in \+P_{(u_3^*),L-2}$} \mid u_1\mbox{ open}}\notag\\
	&=\sum_{\substack{u_3^*:~ \exists u_2\text{ s.t.}\\(u_2,u_3^*)\text{ type I}}} \Pr{u_3^*\mbox{ open}\mid u_1\mbox{ open}} \cdot  \Pr{\mbox{$\exists$open $P' \in \+P_{(u_3^*),L-2}$}\mid (u_1,u_3^*)\mbox{ open}}\notag\\
	&=\sum_{\substack{u_3^*:~ \exists u_2\text{ s.t.}\\(u_2,u_3^*)\text{ type I}}} \Pr{u_3^*\mbox{ open}\mid u_1\mbox{ open}} \cdot  \Pr{\mbox{$\exists$open $P' \in \+P_{(u_3^*),L-2}$}\mid u_3^* \mbox{ open}}\notag\\
	&\le 2k^2d\cdot 2^{-k} \cdot (1-\eps)^{\lfloor\frac{L-4}{2}\rfloor}.\label{eqn:typeI}
\end{align}

\subsection*{$(u_2,u_3)$ is of type II} This case is more complicated. We will enumerate all type II pairs $(u_2,u_3)$ and bound the sum of probabilities that some path in $\+P_{(u_2,u_3),L-1}$ is open conditional on $u_1$ being open. Our plan is to first enumerate $(C_2,C_3)$ and then consider $(u_2,u_3)$ whose labels are exactly $C(u_2)=C_2$ and $C(u_3)=C_3$. 

Let us write $C_1 := C(u_1)$ and fix any pair $(C_2,C_3) \in \C^2$ such that $\abs{C_2\cap C_1}>\alpha\cdot k$ and $C_3\cap C_1=\emptyset$. We aim to bound the probability
\begin{equation*}
	\sum_{\substack{u_2: C(u_2)=C_2 \\ u_3: C(u_3)=C_3}}\Pr{\mbox{$\exists$open $P \in \+P_{(u_1,u_2,u_3),L}$}\mid u_1 \mbox{ open}}.
\end{equation*}
Recall the notations in \Cref{sec:witness-graph}; we represent the $u_2$, $u_3$ of interest as $u_2=e_{C_2,t_2}$ and $u_3=e_{C_3,t_3}$ respectively.  By the construction of the witness graph, $u_1\rightarrow u_2$ implies $\min u_1 \leq t_2 < \max u_1$. Hence $t_2$ is sandwiched in an interval of length $n$ and has at most $k$ valid choices (as we require $v_{t_2} \in C_2$). After fixing $t_2$, we similarly have at most $k$ valid choices for $t_3$. Altogether there are at most $k^2$ many pairs $(u_2,u_3)$.

Now denote $a:= \abs{C_2\cap C_1}$ and $b:= \abs{C_3\cap C_2}$. For each pair $(u_2,u_3)$, we have $\abs{u_2\cap u_1}\leq a$, $\abs{u_3\cap u_2}\leq b$ and
\begin{align*}
	&\phantom{{}={}}\Pr{\mbox{$\exists$open $P \in \+P_{(u_1,u_2,u_3),L}$}\mid u_1 \mbox{ open}}\\
	&= \Pr{(u_2, u_3)\mbox{ open}\mid u_1\mbox{ open}} \cdot \Pr{\mbox{$\exists$open $P \in \+P_{(u_1,u_2,u_3),L}$}\mid (u_1,u_2,u_3)\mbox{ open}}\\
	&\le \Pr{(u_2, u_3)\mbox{ open}\mid u_1\mbox{ open}} \cdot  \Pr{\mbox{$\exists$open $P' \in \+P_{(u_3),L-2}$}\mid u_3\mbox{ open}}\\
	&\leq 2^{a+b-2k} \cdot (1-\eps)^{\lfloor\frac{L-4}{2}\rfloor}.
\end{align*}
where the last line used induction hypothesis. Therefore,
\begin{equation*}
	\sum_{\substack{u_2: C(u_2)=C_2 \\ u_3: C(u_3)=C_3}}\Pr{\mbox{$\exists$open $P \in \+P_{(u_1,u_2,u_3),L}$}\mid u_1 \mbox{ open}} \leq k^2\cdot  2^{a+b-2k} (1-\eps)^{\lfloor\frac{L-4}{2}\rfloor}.
\end{equation*}

It remains to sum over all $(C_2,C_3)$ pairs. Since we know that $\abs{C_2\cap C_1}>\alpha\cdot k$, the number of choices of $C_2$ is at most $\frac{d\cdot k}{\alpha\cdot k} = \frac{d}{\alpha}$\footnote{To see this, consider the following way to enumerate all $C_2$ incident to $C_1$ in $\Phi$: First pick a vertex in $C_1$ and then pick one of its incident hyperedges. This way we enumerated (with repetitions) $k \cdot d$ hyperedges in total, and every hyperedge $C_2: \abs{C_2\cap C_1}>\alpha \cdot k$ is enumerated at least $\alpha\cdot k$ times.}. For every fixed $C_2$, we let $\+C = \set{C_3^{(1)},C_3^{(2)},\dots,C_3^{(m)}}$ be the collection of all possible $C_3$. Denote $a := \abs{C_2\cap C_1}$ as before and for every $i\in [m]$ let $b_i := \abs{C_3^{(i)} \cap C_2}$. The probability of interest can therefore be written as
\begin{align*}
	&\phantom{{}={}} \Pr{\mbox{$\exists$open $(u_1,u_2,u_3,\dots)\in \+P_{(u_1),L}$:~ type-II $(u_2,u_3)$,~ $C(u_2)=C_2$} \mid u_1\mbox{ open}} \\
	&\le \sum_{\substack{(u_1,u_2,u_3,\dots)\in \+P_{(u_1),L}:\\(u_2,u_3)\mbox{ \scriptsize type II},\\C(u_2)=C_2}}\Pr{\mbox{$\exists$open $P \in \+P_{(u_1,u_2,u_3),L}$}\mid u_1 \mbox{ open}}\\
	&\le k^2\cdot \tp{\sum_{i=1}^m 2^{a+b_i-2k}}\cdot (1-\eps)^{\lfloor\frac{L-4}{2}\rfloor}.
\end{align*}

To bound the term $\sum_{i=1}^m 2^{a+b_i-2k}$, let us list some properties of the numbers involved:
\begin{itemize}
    \item $a \le k$  and $1 \le m \le (k-a) d$;
	\item $1 \le b_i \le k-a$ for all $i\in [m]$;
	\item $\sum_{i=1}^m b_i\le (k-a)d$. (This can be seen by an argument similar to the last footnote and the fact that $C_3 \cap C_1=\emptyset$.) 
\end{itemize}
Roughly the numbers are acting against each other: If $a$ and $m$ are large then the $b_i$'s are small. So it is possible to control $\sum_{i=1}^m 2^{a+b_i-2k}$ reasonably:

\begin{lemma}\label{lem:opt}
    Suppose $a \leq k$ and $1 \le m \le (k-a) d$. Then for any integers $b_1, \dots, b_m \in [k-a]$ such that $\sum_{i=1}^m b_i \leq (k-a)d$, we have the inequality $\sum_{i=1}^m 2^{a+b_i-2k} \le d\cdot 2^{-k}$.
\end{lemma}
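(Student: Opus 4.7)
The plan is to strip off the common prefactor: setting $B := k - a$, the target inequality $\sum_{i=1}^m 2^{a+b_i-2k} \le d \cdot 2^{-k}$ is equivalent to $\sum_{i=1}^m 2^{b_i} \le d \cdot 2^{B}$. The assumption $1 \le m \le Bd$ forces $B \ge 1$, so the degenerate case $a = k$ does not arise.

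The key ingredient is a pointwise linear upper bound: I will prove that $2^b \le \frac{2^B}{B}\cdot b$ for every integer $b$ with $1 \le b \le B$, or equivalently that the map $b \mapsto 2^b/b$ is non-decreasing on positive integers. To see this, set $t := B - b \in \{0,1,2,\dots\}$; the inequality rearranges to $1 + t/b \le 2^t$. Since $b \ge 1$ gives $t/b \le t$, it suffices to verify $1 + t \le 2^t$ for integer $t \ge 0$, which follows from a routine induction (the bases $t=0,1$ are equalities, and $2^{t+1} = 2 \cdot 2^t \ge 2(1+t) \ge 1 + (t+1)$ whenever $t \ge 0$).

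With the pointwise bound in hand, everything collapses in one line via the budget constraint $\sum_{i=1}^m b_i \le Bd$:
\[
\sum_{i=1}^m 2^{b_i} \;\le\; \frac{2^B}{B}\sum_{i=1}^m b_i \;\le\; \frac{2^B}{B}\cdot Bd \;=\; d \cdot 2^B,
\]
and multiplying through by $2^{a-2k}$ yields the claim. The main obstacle is spotting the correct linear envelope $b \mapsto \frac{2^B}{B}\cdot b$; once this is in place, the rest is a single substitution. Observe that the constraint $m \le (k-a)d$ plays no independent role, since $\sum_i b_i \le Bd$ together with $b_i \ge 1$ already enforces it; this also means no separate case analysis is needed for $B = 1$ versus $B \ge 2$.
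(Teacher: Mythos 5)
Your proof is correct, and it takes a genuinely more elementary route than the paper's, though the two share the same mathematical core. The paper formulates the bound as a linear program over the counters $x_j := \abs{\{i : b_i = j\}}$ and invokes strong LP duality: it writes down the dual, observes that $2^j/j$ is non-decreasing (so the binding dual constraint is the one at $j = k-a$), and reads off the optimal dual objective $d \cdot 2^{-k}$. Your argument finds the same linear envelope $b \mapsto (2^B/B)\,b$ directly; this is precisely the dual-feasible certificate $y^* = 2^{-k}/(k-a)$ in disguise (rescaled by $2^{a-2k}$), and the one-line calculation $\sum_i 2^{b_i} \le (2^B/B)\sum_i b_i \le d\cdot 2^B$ is exactly weak duality applied to it. The upshot is that you never need strong duality at all -- since only an upper bound is required, exhibiting one dual-feasible point suffices -- and you spell out the monotonicity of $2^j/j$ by the explicit inequality $1 + t \le 2^t$ rather than stating it as ``clear.'' Your side remark that $m \le (k-a)d$ is redundant given $\sum_i b_i \le (k-a)d$ and $b_i \ge 1$ is also correct and is a small clarification the paper does not make. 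Overall: same key observation, cleaner and more self-contained packaging.
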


\begin{proof}
    For each $j \in [k-a]$ we introduce a counter $x_j := \abs{\set{i: b_i = j}}$ that counts the number of $b_i$'s taking a specific value $j$. Then the constraint of the lemma translates to
    \[
        \sum_{j=1}^{k-a} j \cdot x_j \leq (k-a)d,
    \]
    and we want to show
    \[
        \sum_{j=1}^{k-a} 2^{a+j-2k} \cdot x_j \leq d \cdot 2^{-k}.
    \]
    
    To this end, we consider a (relaxed) linear program with variables $x_1, \dots, x_{k-a}$:
    \begin{align*}
        \max \quad & \sum_{j=1}^{k-a} 2^{a+j-2k} \cdot x_j \\
        \mbox{s.t} \quad & \sum_{j=1}^{k-a} j \cdot x_j \leq (k-a)d\\
        &x_j\geq 0 \quad (\forall j \in [k-a]).
    \end{align*}
    
    By the strong duality theorem of linear programming, its maximum value equals the minimum value of the dual program
    \begin{align*}
        \min \quad & (k-a)d \cdot y \\
        \mbox{s.t} \quad & j \cdot y \geq 2^{a+j-2k} \quad (\forall j \in [k-a])\\
        &y \geq 0.
    \end{align*}
    
    Clearly the minimum value is obtained when $y$ is as small as possible. It is clear that the sequence $h_j := \frac{2^j}{j}$ is monotonically increasing for integers $j \geq 1$, so the minimum possible $y$ is given by the $(k-a)$-th constraint, namely
    \[
        y^* := \frac{2^{a+(k-a)-2k}}{k-a} = \frac{2^{-k}}{k-a}.
    \]
    Therefore, the minimum value of the dual -- also the maximum value of the primal -- is exactly $d \cdot 2^{-k}$, as desired.
\end{proof}

Returning to the previous discussion, we derive
\begin{align}
	&\phantom{{}={}}\Pr{\mbox{$\exists$open $P=(u_1,u_2,u_3,\dots)\in \+P_{(u_1),L}:$ type II $(u_2,u_3)$}\mid u_1\mbox{ open}}\notag\\
	&\le \frac{k^2d^2}{\alpha}\cdot 2^{-k}\cdot (1-\eps)^{\lfloor\frac{L-4}{2}\rfloor}.\label{eqn:typeII}
\end{align}

\subsection*{Putting all together} Plugging Equations (\ref{eqn:good}), (\ref{eqn:typeI}) and (\ref{eqn:typeII}) into \Cref{eqn:3-cases}, we have
\begin{align*}
	&\phantom{{}={}}\Pr{\mbox{$\exists$open $P\in \+P_{(u_1),L}$}\mid u_1\mbox{ open}}\\
	&\le \tp{k^2d \cdot 2^{-(1-\alpha)k}+2k^2d \cdot 2^{-k} +\frac{k^2d^2}{\alpha}\cdot 2^{-k}}\cdot (1-\eps)^{\lfloor\frac{L-4}{2}\rfloor}\\
	&< \left(2^{\alpha k + 2} + \frac{d}{\alpha} \right) k^2 d \cdot 2^{-k} \cdot (1-\eps)^{\lfloor\frac{L-4}{2}\rfloor}
\end{align*}
where we used a very crude inequality $2^{\alpha k} + 2 < 2^{\alpha k + 2}$. Let us take $\alpha = \alpha(k) := \frac{1}{2} + \frac{2 - \log k}{k} \in [1/4,1]$ for any $k \geq 2$. Note that $2^{\alpha k+2} = 2^{4+k/2} / k$ by definition. Hence the above bound continues as:
\begin{align*}
	\cdots
	&\leq \left(2^{4+k/2} kd + 4 (kd)^2 \right) 2^{-k} \cdot (1-\eps)^{\lfloor\frac{L-4}{2}\rfloor} \\
	&\leq (1-\eps) \cdot (1-\eps)^{\lfloor\frac{L-4}{2}\rfloor} \\
	&\leq (1-\eps)^{\lfloor \frac{L-2}{2} \rfloor}
\end{align*}
where the second line is due to our assumption $d \leq  \frac{(1-\eps)}{20}\cdot \frac{2^{k/2}}{k}$. 

    

\subsection{Improved bound for linear hypergraphs}

In this section, we show that our perfect sampler for linear hypergraphs is efficient when $d \leq \frac{(1-\eps)}{4} \cdot \frac{2^k}{k^2}$. \Cref{thm:sym-main-liner} is a consequence of the following lemma.

 \begin{lemma}\label{lem:percolation-sym-linear}
 	For all $\eps\in(0,1)$ and $k \geq 2$, if $\Phi$ is a linear hypergraph and
	\[
		d \leq \frac{(1-\eps)}{4} \cdot \frac{2^k}{k^2}
	\]
	then $\Pr{\B_L}\le \frac{k}{2^k}\cdot m \cdot (1-\eps)^{\lfloor\frac{L}{2}-1\rfloor}$ for all $L \in \^N$.
 \end{lemma}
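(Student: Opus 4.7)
The plan is to establish the inductive statement
\[
\Pr{\exists\text{ open } P\in\+P_{(u),L} \mid u\text{ open}} \le (1-\eps)^{\lfloor L/2-1\rfloor}
\]
for every $u\in V_{H_T}$ and every $L\in\^N$, which is the direct analog of \Cref{lem:sym} in the linear setting. From this, \Cref{lem:percolation-sym-linear} will follow by precisely the same union-bound step over the starting vertices $e_{C,t}$ with $t\in(T-n,T]$ that opens the proof of \Cref{lem:percolation-sym}.

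For the inductive step I will reuse the good/type~I/type~II trichotomy from the proof of \Cref{lem:sym}, but now choose the threshold $\alpha := 1/k$. The key structural consequence of linearity is that $|C(u_2)\cap C(u_1)| \le 1$ whenever $C(u_2)\ne C(u_1)$, so the ``not good'' condition $|C(u_2)\cap C(u_1)| > \alpha k = 1$ forces $C(u_2) = C(u_1)$. But then any $C(u_3)$ incident to $C(u_2)=C(u_1)$ automatically intersects $C(u_1)$, because the edge $u_2\to u_3$ in $H_T$ already guarantees $C(u_2)\cap C(u_3)\ne\emptyset$ (a shared timestamp in $u_2\cap u_3$ comes from a shared vertex of the two hyperedges, since the scan updates a unique vertex at each time). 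Hence the type~II condition $C(u_3)\cap C(u_1)=\emptyset$ cannot be met; type~II is vacuous, and the LP/duality argument of \Cref{lem:opt} is not needed at all.

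Only the good and type~I contributions survive. Plugging $\alpha=1/k$ into \eqref{eqn:good} yields $k^2 d\cdot 2^{-(k-1)}\cdot(1-\eps)^{\lfloor(L-3)/2\rfloor}$ for the good case, and \eqref{eqn:typeI} gives $2k^2 d\cdot 2^{-k}\cdot(1-\eps)^{\lfloor(L-4)/2\rfloor}$ for the type~I case. Using $\lfloor(L-3)/2\rfloor \ge \lfloor(L-4)/2\rfloor$ to align the exponents and summing, the total is at most $4k^2 d\cdot 2^{-k}\cdot(1-\eps)^{\lfloor(L-4)/2\rfloor}$. The hypothesis $d\le \frac{(1-\eps)}{4}\cdot \frac{2^k}{k^2}$ immediately implies $4k^2 d\cdot 2^{-k}\le 1-\eps$, so the induction closes at $(1-\eps)^{\lfloor(L-4)/2\rfloor+1} = (1-\eps)^{\lfloor L/2-1\rfloor}$, as desired.

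The main obstacle I anticipate is purely structural: cleanly arguing that linearity collapses the ``not good'' alternative to the single scenario $C(u_2)=C(u_1)$ and thereby eliminates type~II. Once this observation is in place, the calculation is essentially bookkeeping of the good and type~I bounds already derived in \Cref{lem:sym}. The sharper threshold $2^k/k^2$ (as opposed to $2^{k/2}/k$ in \Cref{lem:percolation-sym}) emerges precisely because we no longer pay the quadratic $d^2$ factor contributed by the type~II term $\frac{k^2d^2}{\alpha}\cdot 2^{-k}$ in \eqref{eqn:typeII}.
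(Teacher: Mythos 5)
Your proposal is correct and follows essentially the same route as the paper: an inductive bound on $\Pr{\exists\text{ open } P\in\+P_{(u),L}\mid u\text{ open}}$ with $\alpha=1/k$, the observation that linearity collapses the not-good case to $C(u_2)=C(u_1)$ and hence forces $C(u_3)\cap C(u_1)\ne\emptyset$ (killing type II), and then summing the good and type I bounds to get $4k^2d\cdot 2^{-k}\le 1-\eps$. Your structural justification for why type II is vacuous (a shared timestamp in $u_2\cap u_3$ comes from a shared vertex) is actually slightly more explicit than the paper's one-line remark, but it is the same argument.
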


     The proof is similar to the proof of \Cref{lem:percolation-sym}. We only need to show that
     \begin{lemma}\label{lem:symlinear}
        For all $\eps\in(0,1)$ and $k \geq 2$, if $\Phi$ is a linear hypergraph and 
    	\[
    		d \leq \frac{(1-\eps)}{4} \cdot \frac{2^k}{k^2}
    	\]
     then for every $u_1\in V_{H_T}$ and every $L \in \^N$,
     \[
         \Pr{\mbox{$\exists$open $P\in \+P_{(u_1),L}$} \mid u_1\mbox{ open}}\le (1-\eps)^{\lfloor\frac{L}{2}-1\rfloor}
     \]
     \end{lemma}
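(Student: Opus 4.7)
The argument follows the same strong-induction framework as \Cref{lem:sym}, with base cases $L\le 2$ immediate since $(1-\eps)^{\lfloor L/2-1\rfloor}\ge 1$. For the inductive step, fix any induced path $(u_1,u_2,u_3,\ldots,u_L)\in\+P_{(u_1),L}$ and condition on $u_1$ being open; we enumerate jointly over the next two vertices $(u_2,u_3)$, bound the conditional probability that both are open, and invoke the induction hypothesis on the tail of length $L-2$ starting at $u_3$ to produce an extra factor $(1-\eps)^{\lfloor(L-4)/2\rfloor}$. The target bound $(1-\eps)^{\lfloor L/2-1\rfloor}$ then reduces to establishing
\[
  \sum_{u_2,u_3}\Pr{u_2,u_3\text{ open}\mid u_1\text{ open}}\le 1-\eps,
\]
where the sum ranges over all valid $(u_2,u_3)$ extending the induced path.

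The key simplification from linearity is that $\abs{C\cap C'}\le 1$ for any two distinct hyperedges $C,C'\in\C$; in particular, $\abs{u_i\cap u_j}\le 1$ whenever $C(u_i)\ne C(u_j)$. We split the sum according to whether the labels $C(u_1),C(u_2),C(u_3)$ are pairwise distinct (the \emph{generic} case) or at least two of them coincide (the \emph{degenerate} cases). In the generic case, linearity together with the induced-path condition $u_1\cap u_3=\emptyset$ gives $\abs{u_2\setminus u_1}\ge k-1$ and $\abs{u_3\setminus(u_1\cup u_2)}\ge k-1$, so $\Pr{u_2,u_3\text{ open}\mid u_1\text{ open}}\le 4\cdot 2^{-2k}$. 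The joint choices of labels and timestamps contribute at most $(kd)^2\cdot k^2=k^4d^2$, giving a generic contribution bounded by $4k^4d^2\cdot 2^{-2k}\le(1-\eps)^2/4$ under the hypothesis $d\le(1-\eps)2^k/(4k^2)$.

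The degenerate configurations are more delicate: when consecutive vertices share a label $C(u_i)=C(u_{i+1})=C$, the overlap $\abs{u_{i+1}\setminus u_i}$ can be as small as $1$ and the ``free coordinates'' bound degrades. The plan is to parameterize the admissible $u_{i+1}$ with repeated label by the number of $C$-vertex updates occurring in $(t_{i+1},t_i]$; each parameter value corresponds to exactly one timestamp, and the resulting sum of openness probabilities telescopes to a bounded geometric series $\sum_{j=1}^{k-1}2^{-j}<1$. For the triply-degenerate case $C(u_1)=C(u_2)=C(u_3)=C$, the induced-path constraint $u_1\cap u_3=\emptyset$ becomes highly restrictive: it forces every $v\in C$ to be updated within $(t_3,t_1]$, which in the systematic-scan setting pins $t_3\le t_1-n$ and thus confines $t_3$ to a window disjoint from the one available for $t_2$. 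Combining the two telescoping estimates with this structural restriction keeps the degenerate contribution below the $1-\eps$ budget.

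The main obstacle is the triply-degenerate sub-case, where unrefined bounds yield a contribution of order $O(1)$ rather than $O(1-\eps)$. The remedy outlined above---exploiting the induced-path constraint and the periodicity of the scan sequence to force $t_3\le t_1-n$---should close the gap. Assembling the generic and degenerate contributions and applying the induction hypothesis then yields the claimed bound $(1-\eps)^{\lfloor L/2-1\rfloor}$.
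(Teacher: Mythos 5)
Your proposal takes a genuinely different route from the paper's, and unfortunately the route has a structural problem that is not closed by the remedies you outline.

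\textbf{What the paper does.} The paper's proof is a three-line specialization of the symmetric analysis in \Cref{refined}: set $\alpha=1/k$, observe that in a linear hypergraph ``$u_2$ not good'' forces $C(u_2)=C(u_1)$ and hence forces any $u_3$ with $u_2\to u_3$ to satisfy $C(u_3)\cap C(u_1)\ne\emptyset$, so \emph{type II is vacuous}; then plug $\alpha=1/k$ into \Cref{eqn:good} and \Cref{eqn:typeI}. The crucial feature of that machinery, which your proposal does not use, is a \emph{mixed-depth recursion}: in the good case one enumerates only $u_2$ and recurses on $\+P_{(u_2),L-1}$, while in the type-I case one discards $u_2$ entirely, enumerates only $u_3$, uses $\Pr{u_3\text{ open}\mid u_1\text{ open}}=2^{-k}$ (exploiting $u_1\cap u_3=\emptyset$), and recurses on $\+P_{(u_3),L-2}$. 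This gives $2k^2d/2^k + 2k^2d/2^k = 4k^2d/2^k\le 1-\eps$, exactly the stated constant $1/4$.

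\textbf{Where your approach runs into trouble.} You always enumerate both $u_2$ and $u_3$ and recurse at $u_3$. In your ``generic'' case this is harmless (you get $4k^4d^2/2^{2k}\le(1-\eps)^2/4$), but in the degenerate case $C(u_1)=C(u_2)\ne C(u_3)$ this is strictly worse than the paper's treatment. There, each valid $u_3$ is adjacent to $u_2$, so (by linearity) $\abs{u_3\cap u_2}=1$ and $\Pr{u_3\text{ open}\mid u_1,u_2\text{ open}}=2^{-(k-1)}$, a factor of $2$ worse than the paper's $\Pr{u_3\mid u_1}=2^{-k}$; the geometric sum $\sum_{j_2}2^{-j_2}<1$ does not absorb this loss. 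Counting $kd$ choices of $C_3$ and at most $2k$ timestamps per label already yields a contribution of order $4k^2d/2^k$, which by itself uses the \emph{entire} $(1-\eps)$ budget. Adding the generic case and the case $C(u_2)=C(u_3)\ne C(u_1)$ (itself of order $2k^2d/2^k$) then overshoots. One can partially repair this by noticing that $u_3$ must intersect $u_2\setminus u_1$, forcing $C_3$ to contain one of only $j_2$ vertices (giving $\sum_j j2^{-j}=2$ in place of $\sum_j 2^{-j}<1$ and $d$ in place of $kd$), but even so the total budget is not met with the constant $4$ in the denominator; you would at best prove a weaker version of the lemma with a larger constant.

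\textbf{A secondary issue.} Your write-up identifies the triply-degenerate case $C(u_1)=C(u_2)=C(u_3)$ as the main obstruction, but that is not where the difficulty lies. In that case $u_1\cap u_3=\emptyset$ forces $j_2+j_3\ge k$ (every vertex of $C$ is updated in $(t_3,t_1]$, so what is not new at $u_2$ relative to $u_1$ must be new at $u_3$ relative to $u_2$), and the constrained sum $\sum_{j_2,j_3\ge1,\,j_2+j_3\ge k}2^{-j_2-j_3}$ is $O(k\,2^{-k})$, comfortably small. Your stated concern about an $O(1)$ contribution there is unfounded once the constraint $j_2+j_3\ge k$ is used; the dominant loss is in the singly-degenerate case $C(u_1)=C(u_2)\ne C(u_3)$ discussed above.

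In short, the decomposition by label coincidences is an interesting idea but, without the mixed-depth recursion, it does not reproduce the constant $1/4$ in the hypothesis $d\le\frac{(1-\eps)}{4}\cdot\frac{2^k}{k^2}$. I would recommend adopting the paper's observation that type II disappears for linear hypergraphs and re-using \Cref{eqn:good}--\Cref{eqn:typeI} directly.
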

    
    \begin{proof}
        Let us take $\alpha := 1/k$. Improvement in \Cref{lem:symlinear} over \Cref{lem:sym} comes from the absence of type II branchings: Two hyperedges share at most one vertex in linear hypergraphs, meaning that if $u_2$ is not good then $C(u_2)=C(u_1)$. Hence
        \begin{align}
        	&\phantom{{}={}}\Pr{\mbox{$\exists$open $P\in \+P_{(u_1),L}$}\mid u_1\mbox{ open}}\label{eqn:2-cases}\\
        	&\le \Pr{\mbox{$\exists$open $P=(u_1,u_2,u_3,\dots)\in \+P_{(u_1),L}:$ good $u_2$}\mid u_1\mbox{ open}}\notag\\
        	&\quad+\Pr{\mbox{$\exists$open $P=(u_1,u_2,u_3,\dots)\in \+P_{(u_1),L}:$ type I $(u_2,u_3)$}\mid u_1\mbox{ open}}.\notag
        \end{align}
        Recall the calculations (\Cref{eqn:good} and \Cref{eqn:typeI}) in \Cref{refined}. We plug in our choice $\alpha = 1/k$ and obtain
        \begin{align*}
        	\Pr{\mbox{$\exists$open $P\in \+P_{(u_1),L}$}\mid u_1\mbox{ open}}
        	&\le \tp{k^2 d\cdot 2^{-(k-1)} + 2 k^2 d \cdot 2^{-k}}\cdot (1-\eps)^{\lfloor\frac{L-4}{2}\rfloor}\\
        	&\leq (1-\eps)\cdot (1-\eps)^{\lfloor \frac{L-4}{2} \rfloor}\\
                &=(1-\eps)^{\lfloor \frac{L}{2}-1 \rfloor}
        \end{align*}
        where the second line is due to our assumption $d \leq \frac{(1-\eps)}{4} \cdot \frac{2^k}{k^2}$.
    \end{proof}

\section{Acknowledgements}

Chihao Zhang was supported by National Natural Science Foundation of China Grant 61902241. 



\bibliography{refs}
\bibliographystyle{alpha}

\end{document}